\documentclass[11pt]{article}

\usepackage{microtype}
\usepackage{graphicx}
\usepackage{subcaption}
\usepackage{booktabs} 
\usepackage{amssymb,amsmath,amsthm,sectsty,url}
\usepackage[letterpaper,hmargin=1.0in,vmargin=1.0in]{geometry}
\usepackage[pdftex,colorlinks,linkcolor=blue,citecolor=blue,filecolor=blue,urlcolor=blue]{hyperref}
\usepackage{cleveref}
\usepackage{color}
\usepackage[boxed]{algorithm}
\usepackage[labelfont=bf]{caption}
\usepackage{tikz}
\usepackage{graphicx}
\usepackage{nicefrac}
\usepackage{aliascnt}

\newtheorem{theorem}{Theorem}[section]
\crefname{theorem}{Theorem}{Theorems}

\newaliascnt{lemma}{theorem}
\newtheorem{lemma}[lemma]{Lemma}
\aliascntresetthe{lemma}
\crefname{lemma}{Lemma}{Lemmas}

\newaliascnt{proposition}{theorem}

\aliascntresetthe{proposition}
\crefname{proposition}{Proposition}{Propositions}

\newaliascnt{corollary}{theorem}

\aliascntresetthe{corollary}
\crefname{corollary}{Corollary}{Corollaries}

\newaliascnt{fact}{theorem}

\aliascntresetthe{fact}
\crefname{fact}{Fact}{Facts}

\newaliascnt{definition}{theorem}
\newtheorem{definition}[definition]{Definition}
\aliascntresetthe{definition}
\crefname{definition}{Definition}{Definitions}

\newaliascnt{remark}{theorem}
\newtheorem{remark}[remark]{Remark}
\aliascntresetthe{remark}
\crefname{remark}{Remark}{Remarks}

\newaliascnt{conjecture}{theorem}

\aliascntresetthe{conjecture}
\crefname{conjecture}{Conjecture}{Conjectures}

\newaliascnt{claim}{theorem}

\aliascntresetthe{claim}
\crefname{claim}{Claim}{Claims}

\newaliascnt{question}{theorem}

\aliascntresetthe{question}
\crefname{question}{Question}{Questions}

\newaliascnt{exercise}{theorem}

\aliascntresetthe{exercise}
\crefname{exercise}{Exercise}{Exercises}

\newaliascnt{example}{theorem}

\aliascntresetthe{example}
\crefname{example}{Example}{Examples}

\newaliascnt{notation}{theorem}

\aliascntresetthe{notation}
\crefname{notation}{Notation}{Notations}

\newaliascnt{problem}{theorem}

\aliascntresetthe{problem}
\crefname{problem}{Problem}{Problems}

\newcommand{\norm}[1]{\lVert#1\rVert}

\def\E{\mathbb E}

\newcommand{\R}{\mathbb R}

\newcommand\diag[1]{\ensuremath{\mathrm{diag}\left(#1\right)}}
\newcommand\numberthis{\addtocounter{equation}{1}\tag{\theequation}}

\newcommand{\kk}{\mathbf{k}}

\title{\textbf{\Large New Bounds for Kernel Sums via Fast Spherical Embeddings}}

\author{Tal Wagner\\
\small Tel Aviv University\\
\texttt{talwag@tauex.tau.ac.il}}
\date{} 

\begin{document}
\maketitle

\begin{abstract}
We study query time bounds for the fundamental problem of estimating the kernel mean $\frac1{|X|}\sum_{x\in X}\kk(x,y)$ of a query $y$ in a finite dataset $X\subset\R^d$ up to a prescribed additive error $\varepsilon$. The best known bounds for the Gaussian kernel are $O(d/\varepsilon^2)$, $\widetilde O(d+1/\varepsilon^4)$, and $\widetilde O(d+\Delta^2/\varepsilon^2)$, where $\Delta$ is the diameter of a region containing the points. We prove the new bound $\tilde O(d+\varepsilon\Delta^2+1/\varepsilon^3)$, which improves over the previous ones in regimes with small error $\varepsilon$ and intermediate diameter $\Delta$.

At the center of our proof is a new fast spherical embedding theorem in the sense introduced by Bartal, Recht and Schulman (2011), which limits the embedded data diameter while preserving local Euclidean distances and avoiding ``distance collapse'' at larger scales. This fast embedding theorem may be of independent interest.

\end{abstract}

\section{Introduction}\label{sec:intro}
Estimating the empirical kernel density of a point in a finite dataset is a long-studied problem in machine learning with widespread use. We study the following data structure formulation of the kernel density estimation (KDE) problem. 

\begin{definition}\label{def:kde}
Let $\kk:\R^d\times\R^d\rightarrow\R$ be a map that we call the kernel.  
    A randomized $(\varepsilon,\delta)$-KDE data structure $\mathcal{D}_{\kk,\varepsilon,\delta}$ is defined as follows. 
    $\mathcal{D}_{\kk,\varepsilon,\delta}$ is constructed once over a finite set $X\subset\R^d$. Given each fixed query $y\in\R^d$, the goal is for $\mathcal{D}_{\kk,\varepsilon,\delta}$ to report a KDE estimate $\widetilde E(y)$ such that
\[
  \Pr\left[\left|\frac1{|X|}\sum_{x\in X}\kk(x,y)-\widetilde E(y)\right|<\varepsilon\right] > 1-\delta,
\]
where the probability is over the construction randomness of $\mathcal{D}_{\kk,\varepsilon,\delta}$. 
\end{definition}
Our main interest is in minimizing the query time.
This problem is well-studied and has given rise to classical methods like the Fast Gauss Transform \cite{greengard1991fast} and Random Fourier Features \cite{rahimi2007random}. 
We focus mostly on the Gaussian kernel $\kk(x,y)=\exp(-\norm{x-y}_2^2/\sigma^2)$, where $\sigma>0$ is the bandwidth parameter, and on the high-dimensional regime where exponential dependence on $d$ is prohibitive. We assume that $\Delta\in[0,\infty)$ is an upper bound on the diameter of a region $\mathcal{W}\subset\R^d$ that contains all data and query points used in KDE. Since the kernel is shift-invariant, we may assume w.l.o.g.~that $\mathcal{W}$ is the $d$-dimensional origin-centered ball of diameter 
$\Delta$, which we denote by $\mathbb{B}^d(\Delta)$. 
We call $\Delta_\sigma:=\Delta/\sigma$ the \emph{effective diameter}. In most contexts we will work with $\sigma=1$ (and thus $\Delta=\Delta_\sigma$) in order to ease notation. This does not limit generality as we can simply scale all points by $\sigma^{-1}$.

In all that follows we use the notation $\widetilde O(\cdot)$ to suppress constants and polylogarithmic factors in $d$, $\varepsilon^{-1}$, $\Delta_\sigma$ and $\delta^{-1}$ (or $\eta^{-1}$, as we will use $\eta$ for the failure probability in some contexts to avoid overloading notation). 

In this setting, the currently best bounds known for Gaussian KDE query time as defined in Definition \ref{def:kde} are:
\begin{itemize}
    \item $O(d/\varepsilon^2)$, by random Fourier features (RFF);
    \item $\widetilde O(d+1/\varepsilon^4)$, by RFF composed over the Fast Johnson-Lindenstrauss Transform (FJLT), a fast Euclidean dimension reduction result due to \cite{ailon2009fast}, as proposed and analyzed by \cite{backurs2024efficiently}; 
    \item $\widetilde O(d+\Delta_\sigma^2/\varepsilon^2)$, by the Fastfood method \cite{le2013fastfood}.
\end{itemize}
The bounds are incomparable and depend on the interplay between the parameters. The first two bounds entail no limitation on the diameter, while Fastfood improves over them if the (effective) diameter is sufficiently small relative to $d$ and $\varepsilon^{-1}$. 
The problem addressed in this work is to improve over those bounds.

\subsection{Main Results}

\begin{figure}[t]
  \vskip 0.2in
  \begin{center}
  \begin{tabular}{llll}
          \toprule
          \textsc{Method}  & \textsc{Reference}         & \textsc{Query time}      & 
          \textsc{Regime where best} 
          \\
          \midrule
          RFF    & \cite{rahimi2007random} & $O(d/\varepsilon^2)$ & 
         $d\lesssim\varepsilon^{-2}$ and $\Delta_\sigma\gtrsim\sqrt{d}\varepsilon^{-1.5}$ \\
          FJLT+RFF & \cite{backurs2024efficiently} & $\widetilde O(d+1/\varepsilon^4)$ & $d\gtrsim\varepsilon^{-2}$ and $\Delta_\sigma\gtrsim\varepsilon^{-2.5}$  \\
          Fastfood    & \cite{le2013fastfood}  & $\widetilde O(d+\Delta_\sigma^2/\varepsilon^2)$ & $\Delta_\sigma\lesssim\min\{\sqrt d , \varepsilon^{-0.5}\}$  \\
          \midrule
          Ours    & \Cref{thm:main} & $\widetilde O(d+\varepsilon\Delta_\sigma^2+1/\varepsilon^3)$ & $\varepsilon^{-0.5} \lesssim \Delta_\sigma \lesssim \min\{\sqrt d\varepsilon^{-1.5}, \varepsilon^{-2.5}\}$        \\
          \bottomrule
        \end{tabular}
        \captionof{table}{Summary of KDE query time bounds satisfying Definition \ref{def:kde} for the Gaussian kernel $\kk(x,y)=\exp(-\norm{x-y}_2^2/\sigma^2)$. $\Delta\in(0,\infty]$ is an upper bound on the diameter of a region that contains all points and $\Delta_\sigma=\Delta/\sigma$ is the effective diameter. The bounds here treat $\delta$ as a small constant and assume that $\varepsilon\ll1$ and that the dimension $d$ is high.}
  \label{tbl:main}
 \vspace{1.5em}
\centerline{\includegraphics[width=0.4\columnwidth]{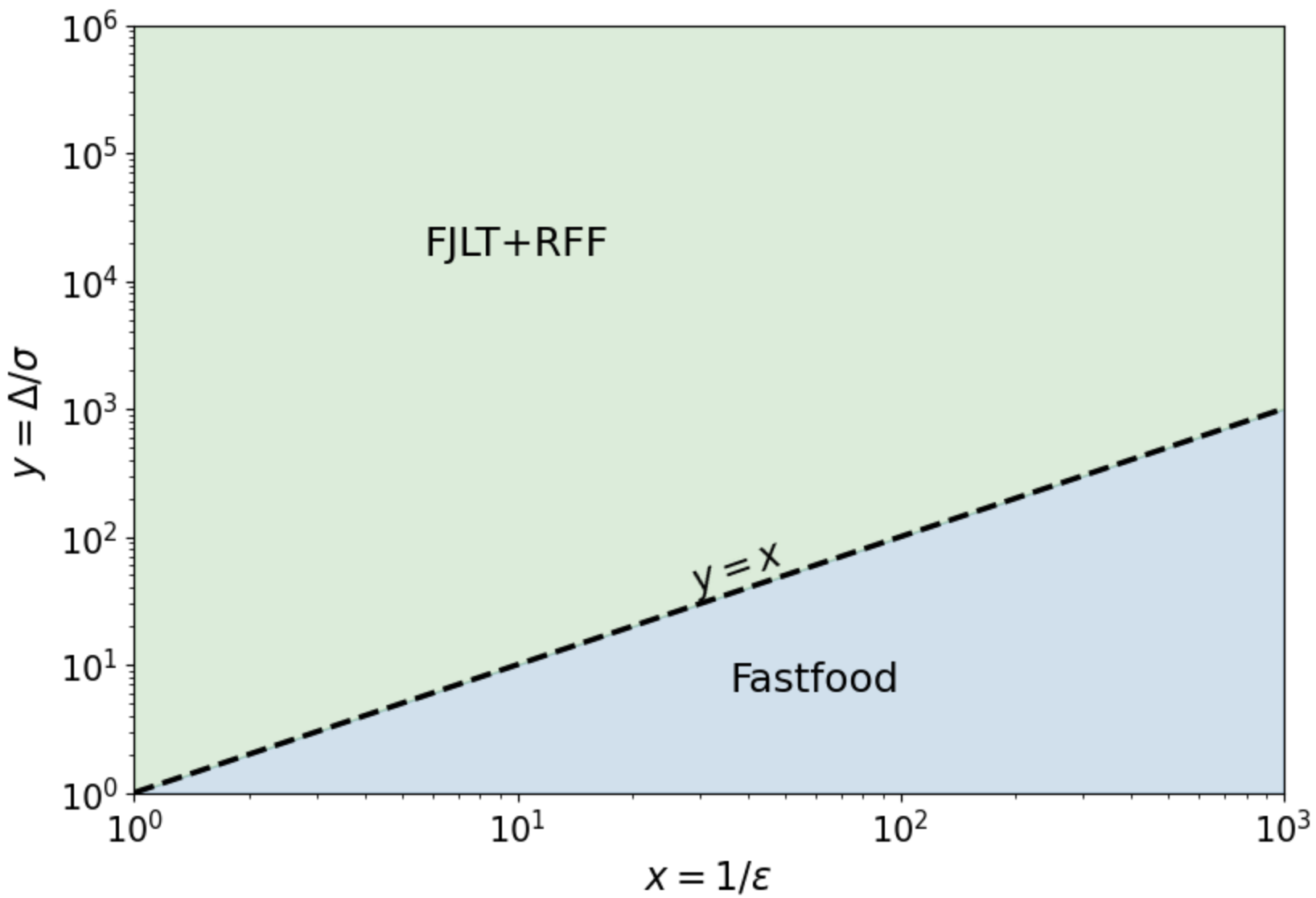}\quad\quad {\includegraphics[width=0.4\columnwidth]{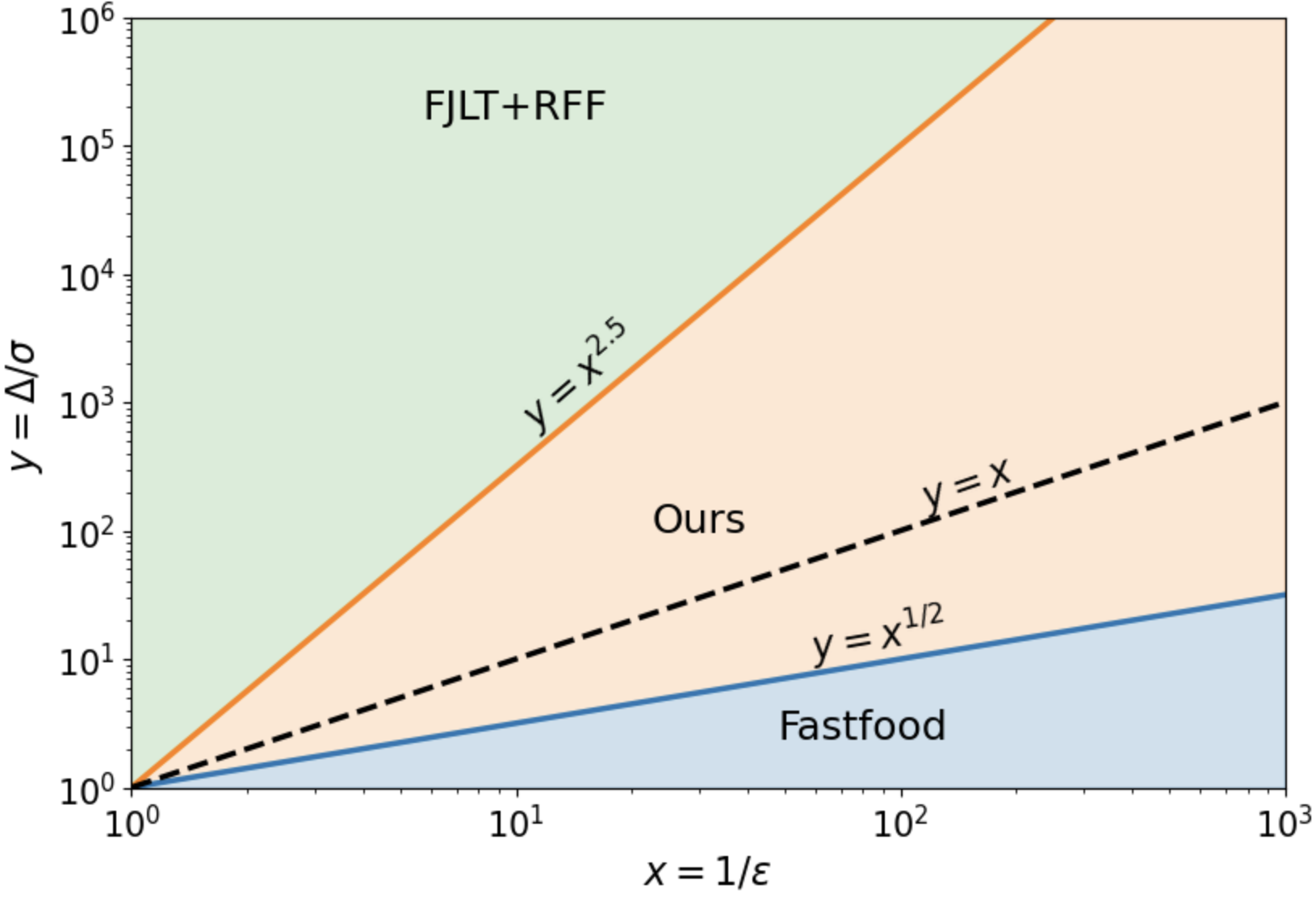}}}
    \captionof{figure}{
      Best regime per method in \Cref{tbl:main}, before (left) and after (right) ours, for Gaussian KDE in the high-dimensional case $d\gtrsim1/\varepsilon^2$ (in this regime, RFF is always subsumed by at least FJLT+RFF). The plots depict the trade-off between the inverse-error $1/\varepsilon$ on the x-axis and the effective diameter $\Delta_\sigma=\Delta/\sigma$ on the y-axis. The axes are in log scale.
    }
    \label{fig:regimes}
  \end{center}
\end{figure}

Our main result is a new bound for Gaussian KDE. 

\begin{theorem}\label{thm:main} 
    There is a Gaussian KDE data structure as in Definition \ref{def:kde} with query time $\widetilde O(d+\varepsilon\Delta_\sigma^2+1/\varepsilon^3)$.
\end{theorem}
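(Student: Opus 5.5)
We would prove \Cref{thm:main} by composing three maps and then running a RFF-type estimator in low dimension. First, apply a Fast Johnson--Lindenstrauss Transform \cite{ailon2009fast} to reduce the dimension from $d$ to $k=\widetilde O(1/\varepsilon^2)$ in time $\widetilde O(d+k)$, preserving all pairwise distances involving the query up to factor $1\pm\varepsilon$. With $\sigma=1$ this changes each kernel value $e^{-\norm{x-y}^2}$ by only $O(\varepsilon)$ additively, since $|e^{-(1\pm\varepsilon)r^2}-e^{-r^2}|\le \varepsilon r^2e^{-r^2/2}=O(\varepsilon)$. The diameter is still $\Delta$ after this step. A direct RFF on $\R^k$ would cost $k/\varepsilon^2=1/\varepsilon^4$, and the task is to save a factor of $1/\varepsilon$.

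Second, apply the new fast spherical embedding promised in the abstract, in the sense of Bartal--Recht--Schulman. This is a map $f:\mathbb{B}^k(\Delta)\to\R^{m}$ whose image has diameter $O(R)$ for a scale $R=\widetilde\Theta(\sqrt{\log(1/\varepsilon)})$. We need three properties from it. Distances up to about $R$ must be preserved within $1\pm\varepsilon$ (in squared form, so additive kernel error stays $O(\varepsilon)$). Larger distances must not collapse: pairs with $\norm{x-y}\gtrsim R$ should map to pairs at distance $\gtrsim\sqrt{\log(1/\varepsilon)}$, so both the true and the embedded kernel values are below $\varepsilon$. Finally, $f$ should be computable in time $\widetilde O(k+\varepsilon\Delta^2+1/\varepsilon^3)$. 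The natural construction is a random-Fourier-type map $x\mapsto (\cos(\langle g_i,x\rangle+b_i))_i$ with frequencies of magnitude $\sim 1/R$, using enough features to control the approximation error at resolution $\varepsilon$ over a region of diameter $\Delta$, and implemented Fastfood-style with Hadamard and diagonal Gaussian matrices. This is where the $\varepsilon\Delta^2$ term comes from. The number of features needed grows like $\Delta^2$ times an error factor, and the structured fast multiplication keeps the cost near-linear in the output dimension.

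Third, in the embedded space the data lie in a ball of diameter $\widetilde O(1)$. Fastfood, or RFF in dimension $\widetilde O(1/\varepsilon^2)$ after a further JL step, costs $\widetilde O(m+\Delta'^2/\varepsilon^2)=\widetilde O(1/\varepsilon^2)$ plus the dimension term. We then average the estimator over $\widetilde O(1/\varepsilon^2)$ features and boost with a median-of-means to get failure probability $\delta$. The total additive error is $O(\varepsilon)$ per kernel term, hence $O(\varepsilon)$ on the mean, and we rescale $\varepsilon$ by a constant.

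The main obstacle is the second step: building the embedding with dimension and running time matching $\varepsilon\Delta^2+1/\varepsilon^3$ while guaranteeing non-collapse uniformly over the query against all data points. Union bounds over $|X|$ are not available for a fixed query without a $\log|X|$ cost. We would therefore aim for a per-pair guarantee in expectation, which suffices because the KDE error is an average over $x\in X$. This requires a Markov-type argument: the fraction of pairs whose embedded distance is distorted beyond the $\varepsilon$-kernel tolerance should be at most $\varepsilon$ with constant probability. We would try that averaging argument first and only then attempt concentration bounds.
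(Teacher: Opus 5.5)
Your architecture (a spherical embedding followed by Fastfood) matches the paper's, but step 2 as you specify it cannot exist. As a result, your accounting of both the $\varepsilon\Delta^2$ and the $1/\varepsilon^3$ terms is wrong.

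For cosine features at frequency scale $\omega$, the embedded squared distance is an average of $2-2\cos t_j$ with $t_j\sim N(0,\omega^2\norm{x-y}_2^2)$, and $2-2\cos t=t^2(1-t^2/12+O(t^4))$. Relative accuracy $1\pm\varepsilon$ for all distances up to $R=\widetilde\Theta(\sqrt{\log(1/\varepsilon)})$ therefore forces $\omega R\lesssim\sqrt{\varepsilon}$. With your choice $\omega\sim 1/R$, even pairs at distance $\Theta(1)$ are contracted by a relative $\Theta(1/\log(1/\varepsilon))$, which gives kernel error $\Theta(1/\log(1/\varepsilon))\gg\varepsilon$. Once you undo the scaling, the image lies on a sphere of radius $1/\omega\gtrsim R/\sqrt{\varepsilon}$, so the embedded diameter is $\widetilde\Theta(1/\sqrt{\varepsilon})$, not $\widetilde O(1)$. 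This is not an artifact of cosine features. Apply the parallelogram law at the image point farthest from the center: any map preserving squared distances within $1\pm\varepsilon$ at scale $r$ along a long closed chain of nearly collinear points must have image radius $\Omega(r/\sqrt{\varepsilon})$.

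This is exactly where the paper's $1/\varepsilon^3$ comes from. It scales by $s=\Theta(\sqrt{\varepsilon/\log(1/\varepsilon)})$, applies \Cref{thm:frnd}, and un-scales onto a sphere of diameter $\widehat\Delta=2/s$. The outer Fastfood (\Cref{thm:fastfood}) then costs $\widetilde O(\widehat\Delta^2/\varepsilon^2)=\widetilde O(\varepsilon^{-3})$. The $\varepsilon\Delta^2$ term comes from non-collapse: $f_z$ is Lipschitz with constant $\widetilde\Theta(\norm{sz}_2/\sqrt{m})$, and a constant-size deviation bound is needed, so $m\gtrsim(s\Delta)^2=\widetilde O(\varepsilon\Delta^2)$. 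With your $\omega\sim1/R$, the same argument would require $\widetilde\Theta(\Delta^2)$ features. Meanwhile, you attribute $1/\varepsilon^3$ to the embedding's running time, but nothing in the construction you describe produces that term; in the paper the embedding costs only $\widetilde O(d+\varepsilon\Delta^2+\varepsilon^{-2})$.

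You also leave unproved the actual technical heart of the argument. You need that the Fastfood-structured map, whose coordinates $(Vz)_j$ are dependent, does not contract small squared distances by more than a factor $1-\varepsilon$. Using $2-2\cos t\ge t^2-t^4/12$, this reduces to showing $\frac1m\sum_j(Vz)_j^4=O(\norm{z}_2^4)$ with high probability. The Lipschitz concentration used for Fastfood does not give this. The paper obtains it from three ingredients:
\begin{itemize}
\item a Wiener chaos decomposition into a 2nd and a 4th chaos term;
\item hypercontractivity of the 4th chaos term;
\item the bound $\sum_{i,j}\Gamma_{ij}^4\le\sum_{i,j}\Gamma_{ij}^2=\frac{m^2}{\norm{z}_2^4}\sum_j u_j^4$, combined with the flattening Lemma~\ref{lmm:mass}.
\end{itemize}

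Conversely, what you flag as the main obstacle is not one. The paper subsamples $X$ to $O(\log(1/\delta)/\varepsilon^2)$ points (Lemma~\ref{lmm:coreset}), so setting $\eta=\delta/|X|$ costs only polylogarithmic factors. Your Markov averaging argument, with per-pair failure probability $\varepsilon\delta$, would also work, since each bad pair contributes only $O(1)$ error. Your FJLT preprocessing is harmless but unnecessary, because \Cref{thm:frnd} already runs in $\widetilde O(d+\cdots)$ time.
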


Like Fastfood, and unlike RFF and RFF+FJLT, the bound in \Cref{thm:main} depends on $\Delta_\sigma$. However, the dependence is more favorable than in Fastfood, and it improves rather than degrades as $\varepsilon$ becomes smaller. The comparison of all bounds and regimes appears in \Cref{tbl:main} and  \Cref{fig:regimes}.

The main ingredient in our proof of \Cref{thm:main} is a new fast spherical embedding theorem. It is essentially an FJLT-analog for the embedding of \cite[Lemma 6]{bartal2011dimensionality} which they termed a ``randomized Nash device''. 
\begin{theorem}\label{thm:frnd}
    Let $\varepsilon,\eta\in(0,1)$ and $\Lambda>0$. Let $m=\widetilde O(d + \Lambda^2 + \varepsilon^{-2})$. There is a randomized map $\Phi:\R^d\rightarrow\mathbb{S}^m$, computable in time $\widetilde O(m)$, such that for each fixed pair $x,y\in\R^d$ the following holds with probability $1-\eta$:
    \begin{enumerate}
        \item[1.] $\norm{\Phi(x)-\Phi(y)}_2^2 \leq (1+\varepsilon)\norm{x-y}_2^2$. 
        \item[2.] $\norm{x-y}_2^2\leq\varepsilon$ $\Rightarrow$ $\norm{\Phi(x)-\Phi(y)}_2^2 \geq (1-\varepsilon)\norm{x-y}_2^2$. 
        \item[3.] $\varepsilon<\norm{x-y}_2^2\leq\Lambda^2$ $\Rightarrow$ $\norm{\Phi(x)-\Phi(y)}_2^2 \geq \Omega(\varepsilon)$. 
    \end{enumerate}
\end{theorem}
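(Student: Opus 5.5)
The plan is to build $\Phi$ as a fast, structured version of the Bartal--Recht--Schulman randomized Nash device. Their device is $x\mapsto \frac{1}{\sqrt{k}}\,(\cos(\langle g_i,x\rangle/s),\sin(\langle g_i,x\rangle/s))_{i\le k}$ with Gaussian $g_i$ and a scale $s$. Here
\[
\E\norm{\Phi(x)-\Phi(y)}_2^2 = 2\bigl(1-e^{-\norm{x-y}_2^2/(2s^2)}\bigr),
\]
which behaves like $\norm{x-y}_2^2/s^2$ for small distances and saturates at $2$ for large ones. We rescale so that the local regime $\norm{x-y}_2^2\le\varepsilon$ sits where $1-e^{-t}\approx t$ up to a $(1\pm\varepsilon)$ factor, i.e.\ $t\lesssim\varepsilon$. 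Concretely, take $s$ of order a constant and multiply the output by a normalizing constant. The outputs $\cos,\sin$ pairs all have norm $1/\sqrt{k}$, so $\Phi(x)$ lies on a sphere, which we normalize to $\mathbb{S}^m$.

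Item 1 then follows deterministically-in-expectation from $1-e^{-t}\le t$. Item 2 follows from $1-e^{-t}\ge t-t^2/2$. Item 3 follows from monotonicity: distances above $\varepsilon$ give expectation $\Omega(\varepsilon)$.

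To make $\Phi$ fast, we replace the i.i.d.\ Gaussian matrix by a composition. First apply an FJLT (randomized Hadamard with random signs, per \cite{ailon2009fast}), which reduces to dimension $\widetilde O(\varepsilon^{-2})$ while preserving $\norm{x-y}_2$ up to $1\pm\varepsilon$. Then apply Fastfood-style blocks $S H G \Pi H B$ \cite{le2013fastfood} to generate the $k$ frequencies in $\widetilde O(k)$ time. Zero-padding to dimension $m\ge d$ accounts for the $d$ term in $m$. The FJLT step costs $\widetilde O(d+\varepsilon^{-2})$. The remaining work is to show concentration of the empirical sum $\frac1k\sum_i 2(1-\cos(\langle w_i,x-y\rangle/s))$ around its mean with the right accuracy.

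The needed accuracy differs by regime. For item 1 and item 2, the error must be relative $\varepsilon$ to the quantity $t=\norm{x-y}^2\le\varepsilon$. Each summand is at most $\langle w_i,x-y\rangle^2/s^2$, whose variance scales like $t^2$. So $k=\widetilde O(\varepsilon^{-2})$ samples give relative error $\varepsilon$ by Bernstein; an upper tail via a $\chi^2$-type bound also handles item 1 for large distances. For item 3 we need only constant-factor accuracy on an $\Omega(\varepsilon)$ mean, and $k=\widetilde O(1/\varepsilon)$ suffices.

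The main obstacle is that Fastfood rows are only blockwise independent, and their marginals are exactly Gaussian only conditional on the Hadamard-mixing step. The approximation quality of the characteristic function, $\E\cos\langle w,z\rangle$, degrades with $\norm{z}$: the dependence within blocks contributes error terms scaling like $\norm{z}^4/\text{dim}$. This is exactly where $\Lambda^2$ enters. We would handle it by choosing the block dimension $m\gtrsim\Lambda^2$ after the FJLT, so that for $\norm{x-y}\le\Lambda$ the Hadamard-mixed vector $HB(x-y)$ has all coordinates $\widetilde O(\Lambda/\sqrt m)$ with high probability. Conditioned on that, each $\langle w_i,z\rangle$ is exactly $\norm{\text{row}}\cdot$Gaussian. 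Variance control across rows within a block then follows from Fastfood's analysis. Distances beyond $\Lambda$ only need item 1, which survives by the pointwise inequality $1-\cos u\le u^2/2$ plus concentration of $\sum\langle w_i,z\rangle^2$, a JL-type statement for the structured matrix.
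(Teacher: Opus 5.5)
Your handling of item~1 (the pointwise bound $1-\cos\theta\le\frac12\theta^2$, then concentration of the structured quadratic form) matches the paper. For $V=\sqrt m\,HGHB$ that form is $\frac1m\norm{Vz}_2^2=\sum_j g_j^2u_j^2$ with $u=HBz$ flattened. Item~3 can also be completed along the paper's lines: flattening plus Lipschitz concentration of $f_z(g)=\frac1m\sum_j\cos((Vz)_j)$, with $m\gtrsim\Lambda^2$ keeping the Lipschitz constant small enough when $\norm{z}_2>1$.

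\textbf{The gap is item~2.} Your relative-error argument there is Bernstein over $k$ independent summands. But all Fastfood coordinates $(Vz)_j=\sqrt m\sum_iH_{ji}u_ig_i$ of a block are functions of the same Gaussian vector $g$, with correlation matrix $\Gamma=\frac{m}{\norm{z}_2^2}H\diag{u^2}H^T$. Since $k=\widetilde O(\varepsilon^{-2})$ while the block dimension is at least of order $\varepsilon^{-2}$ (and $\Lambda^2$), there are only polylogarithmically many independent blocks, so there is no independence to exploit.

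The step you defer to, ``variance control across rows within a block follows from Fastfood's analysis'', does not supply what item~2 needs:
\begin{itemize}
\item Its high-probability tool is Lipschitz concentration of $f_z$ with constant $L_z=\norm{z}_2\sqrt{2\log(2m/\eta)/m}$. This controls deviations only at scale $\norm{z}_2/\sqrt m$ (up to logarithms). Item~2 needs $|f_z-\E f_z|\lesssim\varepsilon\norm{z}_2^2$, which forces $m\gtrsim\varepsilon^{-2}\norm{z}_2^{-2}$ times logarithms. That is unbounded as $\norm{z}_2\to0$.
\item A second-moment bound of order $\norm{z}_2^4/m$ (your ``$\norm{z}^4/\mathrm{dim}$'' terms) has the right scale. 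But through Chebyshev it only gives failure probability $O(1/(\varepsilon^2m))$, i.e.\ $m\gtrsim1/(\varepsilon^2\eta)$ rather than polylogarithmic in $1/\eta$.
\end{itemize}

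What is missing is an exponential-tail, relative-error bound for these dependent features. The paper gets it by passing to polynomials. The bound $1-\cos\theta\ge\frac12\theta^2-\frac1{24}\theta^4$ gives $\norm{\Phi(x)-\Phi(y)}_2^2\ge Q(z)-\frac1{12}W(z)$, with $Q(z)=\frac1m\norm{Vz}_2^2$ and $W(z)=\frac1m\sum_j(Vz)_j^4$. Let $Z_j=(Vz)_j/\norm{z}_2$ and let $h_2,h_4$ be the probabilists' Hermite polynomials. Then $W(z)/\norm{z}_2^4-3$ splits into two parts:
\begin{itemize}
\item a second-chaos part $\frac6m\sum_jh_2(Z_j)=6(Q(z)-\norm{z}_2^2)/\norm{z}_2^2$, handled by the item-1 Bernstein bound;
\item a fourth-chaos part $Y_4=\frac1m\sum_jh_4(Z_j)$, with $\E[Y_4^2]=\frac{24}{m^2}\sum_{i,j}\Gamma_{ij}^4\le\frac{24}{\norm{z}_2^4}\sum_ju_j^4=O(\log(m/\eta)/m)$ by flattening.
\end{itemize}
Hypercontractivity of the fourth chaos then gives $|Y_4|\le1$ with probability $1-\eta$. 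Hence $|W(z)|=O(\norm{z}_2^4)\le O(\varepsilon)\norm{z}_2^2$ when $\norm{z}_2^2\le\varepsilon$.

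Another way to close the gap would be a self-bounding gradient estimate, $\norm{\nabla_g(1-f_z)}_2^2\le2\norm{u}_\infty^2(1-f_z)$, combined with the Gaussian log-Sobolev inequality. This yields a lower tail with variance proxy of order $\norm{u}_\infty^2\norm{z}_2^2$ instead of $L_z^2$.

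Finally, your account of where $\Lambda^2$ enters is inaccurate. For every fixed $B$, $(Vz)_j\sim N(0,\norm{z}_2^2)$ exactly, so $\E\cos((Vz)_j)=e^{-\norm{z}_2^2/2}$ with no approximation error. Flattening is needed to control $\Gamma$ and the Lipschitz constant, and $\Lambda^2$ enters only through the Lipschitz tail in item~3 for $1<\norm{z}_2^2\le\Lambda^2$. Your FJLT preprocessing is harmless but unnecessary.
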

\cite{bartal2011dimensionality}'s embedding is essentially similar, except that it does not have $\Lambda$ and its running time is $O(d/\varepsilon^2)$. 
They introduced their embedding not for KDE but for a different set of applications; see \Cref{sec:conclusion} for a discussion.
\Cref{thm:frnd} may similarly find applications beyond KDE.\\

To summarize our main contributions:
\begin{itemize}
    \item We prove a new time complexity bound for KDE queries, improving in some regimes over cornerstone methods like RFF, FJLT and Fastfood.
    \item We achieve this by a new fast spherical embedding theorem, an FJLT-like analog to a result of  \cite{bartal2011dimensionality}.
    \item At the technical level, our proof introduces a new analysis of randomized Hadamard transforms and of the Fastfood method, based on a fourth chaos analysis (\Cref{sec:contraction}).
\end{itemize}

To justify our focus on the query time of KDE, let us briefly discuss its interaction with other complexity measures relevant to Definition \ref{def:kde}, namely the construction time and space usage. 
All of the methods in \Cref{tbl:main} are based on linear features. This is discussed in detail in \Cref{sec:overview}, but in short, they construct a feature map $x\mapsto f(x)$ into $m$ dimensions. Let $\mathcal{T}_f$ denote the time to apply $f$ to a point. The KDE data structure is constructed by computing and storing  $F(X):=\tfrac1{|X|}\sum_{x\in X}f(x)$ in time $O(|X|\cdot\mathcal{T}_f)$. A query is answered by returning $F(X)^Tf(y)$ in time $\mathcal{T}_Q:=O(\mathcal{T}_f+m)$. Hence, the space usage is $m\leq \mathcal{T}_Q$ and the construction time is $O(|X|\cdot\mathcal{T}_Q)$. Therefore, $\mathcal{T}_Q$ captures the overall cost of the KDE data structure. In other words, there is no hidden lack of efficiency in the construction time or space usage of the data structure. 
Furthermore, it is known that as a preprocessing step, the size of $X$ can be reduced to $O(\log(1/\delta)/\varepsilon^2)$, either by standard random sampling or by a coreset (see Lemma \ref{lmm:coreset}). This is why the query time does not need to depend on $|X|$ at all (even logarithmically).

\subsection{Extensions}\label{sec:extentions}
We show two extensions of our results to other settings of interest. The first is another standard and well-studied family of kernels: the inverse multi-quadratic (IMQ) kernels, also known as rational quadratic kernels, defined as $\kk_\beta^{\mathrm{IMQ}}(x,y)=(1+\norm{x-y}_2^2/\sigma^2)^{-\beta}$ for $\beta>0$. 

\begin{theorem}\label{thm:imq}
    Let $\beta_0>0$ be an arbitrarily small constant. For all $\beta\geq\beta_0$, there is a KDE data structure as in Definition \ref{def:kde} for $\kk_\beta^{\mathrm{IMQ}}(x,y)$ with query time $\widetilde O(d+\varepsilon(\beta\Delta_\sigma)^2+1/\varepsilon^3)$. 
    The $\widetilde O(\cdot)$ notation here also hides a $\log^{O(1)}(\beta)$ term.
\end{theorem}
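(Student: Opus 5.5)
Set $\sigma=1$ by rescaling. The IMQ kernel is a positive mixture of Gaussians via the Gamma integral
\[
(1+r^2)^{-\beta}=\frac{1}{\Gamma(\beta)}\int_0^\infty t^{\beta-1}e^{-t}e^{-tr^2}\,dt ,
\]
which says $\kk_\beta^{\mathrm{IMQ}}(x,y)=\E_{t\sim\mathrm{Gamma}(\beta,1)}[\exp(-t\norm{x-y}_2^2)]$. For each fixed $t$ the inner term is a Gaussian kernel with bandwidth $t^{-1/2}$, so its effective diameter is $\sqrt t\,\Delta$. The plan is to rerun the proof of \Cref{thm:main} for this mixture. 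The Gaussian construction builds linear features on top of the fast spherical embedding of \Cref{thm:frnd} and then uses random features of a kernel on the sphere. Here I would apply the same embedding to the scaled point $\sqrt t\,x$, with a random $t$ attached to each feature. Equivalently, I would write the IMQ kernel's spectral measure as a Gaussian scale mixture: draw $t$, then draw $\omega\sim N(0,2tI)$.

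First I would truncate $t$. Since $\mathrm{Gamma}(\beta,1)$ has exponential tails, $t\le T:=O(\beta+\log(1/\varepsilon))$ with probability $1-\varepsilon/10$, and dropping larger $t$ changes the kernel by at most $\varepsilon/10$. At the small end, the contribution from $t<\varepsilon/(10\Delta^2)$ is essentially the constant $\Pr[t<\cdot]$ up to error $\varepsilon/10$. The constant $\beta_0$ enters here: the Gamma density near $0$ behaves like $t^{\beta-1}/\Gamma(\beta)$, and lower-bounding $\beta$ keeps this mass and the resulting constants polylogarithmic. After truncation, every active scale has effective diameter at most $\sqrt T\,\Delta=\widetilde O(\sqrt\beta\,\Delta)$.

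Next I would discretize $t$ geometrically into $L=O(\log(T\Delta^2/\varepsilon))$ buckets. Within a bucket, the kernel $e^{-tr^2}$ varies by at most a constant factor in its exponent, so I would instead sample $t$ continuously inside each bucket. The estimator is the average over $m$ features, each of which samples $t$ and then uses the Gaussian-KDE feature at scale $t$. By the scale-mixture identity this estimator is unbiased, and its variance is bounded because each feature is bounded, exactly as in the Gaussian case. Running time is the obstacle: the Gaussian construction pays $\widetilde O(\varepsilon\Delta_{\mathrm{eff}}^2)$ through the choice $\Lambda\approx\Delta_{\mathrm{eff}}$ in \Cref{thm:frnd}. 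Using a single embedding with $\Lambda$ set for the largest scale, which means $\Lambda^2\approx T\Delta^2$, and reusing it across features by rescaling inputs would give $\widetilde O(d+\varepsilon\beta\Delta^2+1/\varepsilon^3)$. That is already within the claimed $\varepsilon(\beta\Delta)^2$ for $\beta\ge\beta_0$, up to the $\log^{O(1)}\beta$ and constant factors.

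The step I expect to be hardest is the error accounting across scales. The per-scale error guarantees from the Gaussian proof, in particular the ``no collapse'' item 3 of \Cref{thm:frnd}, hold only for distances up to $\Lambda$ at each scale. I would therefore have to check that the mixture of per-scale biases still sums to $O(\varepsilon)$. To do this I would bound the bias at each $t$ by $\varepsilon$ uniformly, using one union bound over the polylogarithmically many buckets, which is where the $\log^{O(1)}(\beta)$ factor enters. If the uniform per-scale guarantee fails, the fallback is to allocate a separate structure to each of the $L$ buckets with error $\varepsilon/L$, at a polylogarithmic cost.
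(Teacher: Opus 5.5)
Your Gamma-mixture identity and the truncation of $t$ are sound. The step that turns the mixture into a fast data structure, however, fails as described.

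\textbf{Per-feature scales are too expensive.} In the construction behind \Cref{thm:main}, the scale enters as a rescaling of the input \emph{before} the nonlinear spherical embedding $\Psi$. Every output coordinate of the outer Fastfood depends on the whole vector $\Psi(\sqrt{t}\,x)$. So a fresh $t$ per feature means a fresh $\Omega(m)$-time pass per feature, i.e.\ $\widetilde\Omega(m/\varepsilon^{3})$ time.

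\textbf{The features are not i.i.d.\ and unbiased.} There is no variance bound ``exactly as in the Gaussian case''. There, $\Psi$ introduces a bias that is controlled only with high probability (Lemma~\ref{lmm:frnd2kde}). The Fastfood coordinates are Hadamard-correlated, and their concentration comes from Lipschitz concentration in $g$ (\Cref{thm:fastfood}).

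\textbf{One embedding cannot serve all scales.} Calibrated at the largest scale $T$, it preserves only distances with $T\norm{x-y}_2^2\lesssim\log(1/\varepsilon)$, and its image has squared diameter $O(\log(1/\varepsilon)/(\varepsilon T))$. For $t\lesssim\varepsilon T/\log(1/\varepsilon)$ all embedded pairs then look close, and far pairs get constant error. Calibrated at the smallest scale, it is correct for all larger $t$, but it inflates the outer Fastfood's effective diameter by a factor of order $\sqrt{T\Delta^2/\varepsilon}$.

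\textbf{The per-bucket fallback inherits the problem.} Continuous $t$ inside a bucket is again a per-feature scale. A single representative Gaussian per bucket incurs constant additive error, because a factor $2$ in the exponent is not small ($e^{-1}-e^{-2}\approx 0.23$). Buckets of ratio $1+O(\varepsilon)$ avoid this but cost an extra $1/\varepsilon$ factor.

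What is missing is a reduction to polylogarithmically many Gaussian kernels with nonnegative weights of total mass at most $1+O(\varepsilon)$. Each kernel is handled by an independent black-box copy of \Cref{thm:main}, so errors add up to $O(\varepsilon)$ and query times to $\sum_k\widetilde O(d+\varepsilon^{-3}+\varepsilon\Delta^2\lambda_k)$. The paper gets this from the Beylkin--Monz\'on exponential sum (\Cref{thm:imq2gaussiam}). It reads off $\sum_\ell\alpha_\ell\le 1+\varepsilon$ at $x=y$, and $\sum_\ell\lambda_\ell=\widetilde O(\beta^2)$ from a geometric series.

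Within your framework you can get the reduction more cheaply. In each bucket $[t_b,2t_b]$, replace continuous sampling by an $n$-point Gaussian quadrature for the conditional Gamma measure, rescaled to $\tau=t/t_b\in[1,2]$. With $u=t_b\norm{x-y}_2^2$, the $2n$-th derivative of $\tau\mapsto e^{-\tau u}$ on $[1,2]$ is at most $u^{2n}e^{-u}\le(2n)!$. The monic Chebyshev polynomial on an interval of length $1$ has sup norm $2\cdot 4^{-n}$. So the quadrature error is at most $4\cdot 16^{-n}$ times the bucket's mass, uniformly in $\norm{x-y}_2$, and $n=O(\log(1/\varepsilon))$ suffices. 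This yields $O(\log(T\Delta^2/\varepsilon)\log(1/\varepsilon))$ positive-weight nodes, plus a constant term for $t<\varepsilon/\Delta^2$. The nodes satisfy $\sum_k\lambda_k=O(nT)$, giving query time $\widetilde O(d+\varepsilon(\beta+1)\Delta^2+\varepsilon^{-3})$, which is within the claimed bound for $\beta\ge\beta_0$.

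This route is also preferable for the $\log^{O(1)}\beta$ claim. With the step $h=\Theta(1/(\beta+\log(1/\varepsilon)))$ of \Cref{thm:imq2gaussiam}, relative accuracy on all of $[\zeta,1]$ forces the geometric nodes to span a ratio $\Omega(1/\zeta)$. This is because a positive exponential sum has log-derivative in $[-\lambda_{\max},-\lambda_{\min}]$, while that of $r^{-\beta}$ is $-\beta/r$. So for $\Delta_\sigma\ge 2$ one has $N-M=\Omega(\beta\log(1/\zeta))$, and the $(N-M)(d+\varepsilon^{-3})$ term in \Cref{eq:imqtime} is linear, not polylogarithmic, in $\beta$.
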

\Cref{thm:imq} is obtained by combining \Cref{thm:main} with a function approximation result due to \cite{beylkin2010approximation}, following an approach proposed by \cite{backurs2024efficiently}. 
The proof is in \Cref{app:imq}.

The second extension is to differential privacy (DP). The RFF and FJLT+RFF bounds in \Cref{tbl:main} are known to extend to $\varepsilon_{\mathrm{DP}}$-DP KDE in the function release model, under the condition that 
$|X|\geq O(\frac{\log(1/\delta)}{\varepsilon^2\varepsilon_{\mathrm{DP}}})$ 
\cite{wagner2023fast,backurs2024efficiently}. 
Our method as well as Fastfood necessitate a different treatment because of the probabilistic dependence across their coordinates, which is created by the randomized Hadamard transform they are based on. 
We show that they too extend to DP KDE under essentially the same condition. See  \Cref{app:dp} for details and proof.

\begin{theorem}\label{thm:dpkde}
    There is an $\varepsilon_{\mathrm{DP}}$-DP KDE function release mechanism (see Definition \ref{def:dpkde}) with the same construction and query times and memory usage as \Cref{thm:main}, under the condition that $|X|\geq \widetilde O(1/(\varepsilon^2\varepsilon_{\mathrm{DP}}))$. 
\end{theorem}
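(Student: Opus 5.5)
The mechanism will be the standard linear-feature mechanism with Laplace noise added to the released aggregate; the only new difficulty is the sensitivity analysis. As in the non-private construction behind \Cref{thm:main}, we draw the random map once, independently of the data. This map is the fast spherical embedding of \Cref{thm:frnd} composed with Fastfood-style random features, and we denote the resulting feature map by $f:\R^d\rightarrow\R^m$ with $m=\widetilde O(\varepsilon\Delta_\sigma^2+1/\varepsilon^3)$. Since the randomness of $f$ does not depend on $X$, privacy only has to hold conditionally on $f$, with $f$ treated as public. The released function is $\widetilde F(X)=F(X)+\mathrm{Lap}(b)^m$, where $F(X)=\frac1{|X|}\sum_{x\in X}f(x)$. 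A query $y$ is answered by $\widetilde F(X)^Tf(y)$. Construction time, query time and memory are therefore the same as in \Cref{thm:main}, and privacy of the answers for all queries follows by post-processing.

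\textbf{Step 1 (privacy).} Changing one point of $X$ changes $F(X)$ by $\frac1{|X|}(f(x)-f(x'))$. The $\ell_1$ sensitivity is thus at most $\frac{2}{|X|}\max_x\norm{f(x)}_1$. Because of the Hadamard transform, the coordinates of $f$ are dependent, but this does not matter for sensitivity: we need a \emph{deterministic} bound on $\norm{f(x)}_1$, not an independence-based one. The plan is to normalize the features to be bounded, e.g.\ cosine/sine features scaled so that each coordinate has magnitude at most $1/\sqrt m$. Then $\norm{f(x)}_\infty\le 1/\sqrt m$ holds for every $x$ and every draw of the randomness, so $\norm{f(x)}_1\le\sqrt m$. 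Taking $b=2\sqrt m/(|X|\varepsilon_{\mathrm{DP}})$, the Laplace mechanism gives $\varepsilon_{\mathrm{DP}}$-DP.

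\textbf{Step 2 (utility).} Fix a query $y$. The error splits into two parts: the non-private error $|F(X)^Tf(y)-\frac1{|X|}\sum_x\kk(x,y)|$, which is at most $\varepsilon/2$ with probability $1-\delta/2$ by \Cref{thm:main}, and the noise term $Z^Tf(y)$ with $Z\sim\mathrm{Lap}(b)^m$. The key point is that $Z$ is independent of $f$ and has i.i.d.\ coordinates. Hence, conditioned on $f$, $Z^Tf(y)$ is a weighted sum of independent Laplace variables with weight vector $f(y)$, where $\norm{f(y)}_2\le 1$ and $\norm{f(y)}_\infty\le1/\sqrt m$. Bernstein's inequality for sub-exponential variables gives $|Z^Tf(y)|\lesssim b\sqrt{\log(1/\delta)}+b\log(1/\delta)/\sqrt m$. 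This is $O(\sqrt m\log(1/\delta)/(|X|\varepsilon_{\mathrm{DP}}))$, and it is at most $\varepsilon/2$ once $|X|\ge\widetilde O(\sqrt m/(\varepsilon\varepsilon_{\mathrm{DP}}))$.

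\textbf{Main obstacle.} Plugging in $m$ naively gives $\sqrt m/\varepsilon$, which has an extra $\varepsilon^{-1/2}$ and a $\Delta_\sigma$ dependence compared to the claimed $\widetilde O(1/(\varepsilon^2\varepsilon_{\mathrm{DP}}))$. The dependent Hadamard structure is exactly what prevents reusing the RFF argument, where the $m$ features are i.i.d.\ and one can privatize with only $O(1/\varepsilon^2)$ effective features. I would try to fix this by privatizing only the \emph{dataset} rather than the features. First reduce $X$ to a coreset via Lemma~\ref{lmm:coreset}, or subsample it to $O(\log(1/\delta)/\varepsilon^2)$ points. Then observe that the error guarantee of the structure only needs the empirical average over $|X|$ terms of bounded kernel contributions. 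Concretely, I would bound the noise more tightly by noting that the sensitivity in the direction of any fixed query, $\frac1{|X|}|(f(x)-f(x'))^Tf(y)|\lesssim 1/|X|$, is what matters. To exploit this, use Gaussian or Laplace noise calibrated in an $\ell_2$-type norm and then analyze $Z^Tf(y)$ via $\norm{f(y)}_2\le1$. I expect this step, obtaining an $m$-independent noise bound despite the dependence among coordinates, to be the crux.
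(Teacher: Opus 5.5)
Your Steps 1--2 are correct, but they prove a weaker statement. Adding Laplace noise to the full $m$-dimensional mean feature vector requires $|X|\geq\widetilde O(\sqrt m/(\varepsilon\varepsilon_{\mathrm{DP}}))$. Since the released vector has dimension $m\geq\widetilde\Omega(\varepsilon^{-3})$, this condition is at least $\widetilde\Omega(\varepsilon^{-5/2}/\varepsilon_{\mathrm{DP}})$, not the claimed $\widetilde O(\varepsilon^{-2}/\varepsilon_{\mathrm{DP}})$. The step you defer as ``the crux'' is where the proof actually lives, and none of your proposed fixes closes it:
\begin{itemize}
\item A coreset or subsample only shrinks $|X|$, which raises the $1/|X|$ sensitivity and leaves the $\sqrt m$ factor untouched.
\item The per-query sensitivity $\frac1{|X|}|(f(x)-f(x'))^Tf(y)|$ is not usable in the function release model. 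There, a single object must be released before any query is known and must answer all of them.
\item Gaussian noise calibrated to the $\ell_2$-sensitivity $2/|X|$ does remove the $m$-dependence. However, it yields only $(\varepsilon_{\mathrm{DP}},\delta_{\mathrm{DP}})$-DP, whereas the theorem is about pure DP; the paper says this explicitly and mentions the Gaussian variant only as an omitted extension.
\item A pure-DP mechanism calibrated to $\ell_2$ (density $\propto\exp(-\varepsilon_{\mathrm{DP}}|X|\norm{z}_2/2)$) has $\norm{Z}_2\approx m/(|X|\varepsilon_{\mathrm{DP}})$. Its noise along the unit vector $f(y)$ is therefore still $\approx\sqrt m/(|X|\varepsilon_{\mathrm{DP}})$.
\end{itemize}
The real obstruction is not the coordinate dependence as such. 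It is that pure DP on an $m$-dimensional release costs a $\sqrt m$ factor, and because the Fastfood coordinates are dependent, you cannot simply keep $O(1/\varepsilon^2)$ of them as RFF does.

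The missing idea is to compress the feature vector, not the dataset, before adding noise. Apply an independent FJLT $A=\sqrt{m/\ell}\,SHD$ with $\ell=\widetilde O(1/\varepsilon^2)$ to $v_x=K(x)$.
\begin{enumerate}
\item Since $\norm{v_x}_2=1$, the JL property preserves each $v_x^Tv_y$ up to $\pm\varepsilon$.
\item The flattening lemma gives $\norm{Av_x}_\infty\leq L=\widetilde O(1/\sqrt\ell)=\widetilde O(\varepsilon)$ with high probability.
\item Because flattening holds only with high probability, clip each coordinate of $Av_x$ to $[-L,L]$ to obtain the deterministic sensitivity bound you correctly insisted on. With failure probability $\delta/|X|$ per point and a union bound over $X$, this clipping is a no-op on all $x\in X$ with high probability.
\item The clipped mean now has $\ell_1$-sensitivity $2\ell L/|X|=\widetilde O(1/(\varepsilon|X|))$. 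Adding Laplace noise of scale $b=2\ell L/(\varepsilon_{\mathrm{DP}}|X|)$ per coordinate gives pure $\varepsilon_{\mathrm{DP}}$-DP.
\item By Bernstein, using $\norm{Av_y}_2\leq2$ and $\norm{Av_y}_\infty\leq L$, the noise along $Av_y$ is $\widetilde O(b)=\widetilde O(1/(\varepsilon\varepsilon_{\mathrm{DP}}|X|))$. This is at most $\varepsilon$ exactly when $|X|\geq\widetilde O(1/(\varepsilon^2\varepsilon_{\mathrm{DP}}))$.
\end{enumerate}
Applying $A$ costs $O(m\log m)$ extra time, so the time and space bounds of \Cref{thm:main} are unchanged.
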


\section{Technical Overview and Related Work}\label{sec:overview}
This section provides a high-level outline of our method in the context of prior work. 
We focus on the Gaussian kernel with $\sigma=1$ and treat $\delta$ as a fixed small constant for simplicity.
We denote by $\psi:\R^d\rightarrow\R^{2d}$ (for any dimension $d$) the mapping of a vector $x=(x_i)$ to the sine/cosine pairs of its entries, $\psi(x)=\oplus_i(\sin(x_i),\cos(x_i))$. 
We also denote $d_{\varepsilon}=\Theta(1/\varepsilon^2)$ for brevity and assume that $d\gg d_{\varepsilon}$.

\subsection{Background and Prior Work}\label{sec:priormain}
A basic approach in KDE data structures as defined in Definition~\ref{def:kde} is through representing the kernel with approximate linear features. At construction time the data structure selects a (possibly randomized) map $f:\R^{d}\rightarrow\R^{d'}$ and computes $F(X)=\frac{1}{|X|}\sum_{x\in X}f(x)$. At query time it returns the estimate $F(X)^Tf(y)$. The query time is thus $O(d')$ plus the time it takes to evaluate $f$ on $y$. 
In the low-dimensional case, where running times may depend exponentially on $d$, the Fast Gauss Transform \cite{greengard1991fast} is a classical method that takes this form. 

In the high-dimensional case, a cornerstone technique is the RFF method of \cite{rahimi2007random}. Their feature map is $\psi(Wx)$, where $W\in\R^{d_{\varepsilon}\times d}$ has i.i.d.~Gaussian entries.
The query time is thus $O(dd_{\varepsilon})=O(d/\varepsilon^2)$. 

Randomized Hadamard Transforms (RHTs) are a powerful way to speed up methods based on unstructured random matrices \cite{sarlos2006improved,ailon2009fast,tropp2011improved,drineas2011faster,boutsidis2013improved,lu2013faster,ailon2013almost,andoni2015practical,yu2016orthogonal,choromanski2017unreasonable}. The normalized Hadamard matrix of order $2^\ell\times2^\ell$ is defined inductively as
\[
  H_{2^0}:=[\;1\;]\quad,\quad
H_{2^\ell} := \frac{1}{\sqrt2}\begin{bmatrix}
H_{2^{\ell-1}} & H_{2^{\ell-1}} \\
H_{2^{\ell-1}} & -H_{2^{\ell-1}}
\end{bmatrix}.
\]
We will denote it by $H$ and let its order be inferred from context.
An RHT is a random matrix of the form $HD$
where $D$ is a diagonal matrix with i.i.d.~entries (whose distribution may vary by context, e.g., Gaussian or Rademacher). On the one hand, the matrix-vector product $HDx$ can be computed in time $O(d\log d)$ with the Walsh-Hadamard transform. On the other hand, $HD$ often exhibits similar properties to a matrix with i.i.d.~Gaussian entries, rendering the RHT a useful proxy for it. \cite{ailon2009fast} used RHTs in the seminal Fast Johnson-Lindenstrauss Transform (FJLT) to speed up the classical JL Euclidean dimension reduction theorem \cite{johnson1984extensions}, reducing its running time from $O(dd_{\varepsilon})$ to $\widetilde O(d+d_{\varepsilon})$

RHTs have been applied to kernel approximation in various ways. 
\cite{backurs2024efficiently} analyzed FJLT as preprocessing for RFF, obtaining random features of the form $x\mapsto\psi(WSHDx)$, where $S\in\R^{d_{\varepsilon}\times d}$ is a row-subsampling matrix and $W\in\R^{d_{\varepsilon}\times d_{\varepsilon}}$ is a full Gaussian matrix. Note that while the RHT acts as Euclidean dimension reduction (FJLT) from $d$ to $d_{\varepsilon}$, a full Gaussian matrix $W$ is still required in their analysis for kernel approximation with RFF. The overall query time is $\widetilde O(d+d_{\varepsilon}^2)=\widetilde O(d+1/\varepsilon^4)$. 

\cite{le2013fastfood} introduced the RHT-based Fastfood method. We define it formally in \Cref{sec:prelim}, but broadly speaking, it uses two iterated RHTs sequentially, taking the random feature form $x\mapsto\psi(SHD_2HD_1x)$.\footnote{\cite{le2013fastfood} discuss several Fastfood variants; see Remark \ref{rem:full_fastfoo}.  
We use a simplified form which suffices for our purposes.} We remark that iterated RHTs (often more than two) have been shown to be advantageous in various contexts \cite{yu2016orthogonal,andoni2015practical,choromanski2017unreasonable,choromanski2018geometry}. 
Here $S\in\R^{d'\times d}$ is a row-subsampling matrix into $d'=O(\Delta^2/\varepsilon^2)$ dimensions, assuming that the points are known to be in a region of bounded diameter $\Delta$, as is often the case. The running time is $\widetilde O(d+\Delta^2/\varepsilon^2)$,  which improves over the previous approaches if $\Delta$ is sufficiently small.  

\cite{cherapanamjeri2022uniform} proved that concatenating $t=\widetilde O(\Delta^2/\varepsilon^2)$ feature maps $\{\psi(HD_jx)\}_{j=1}^t$ yields a feature map that approximately preserves the Gaussian kernel simultaneously for \emph{all} pairs of points in a bounded region of diameter $\Delta$. While the query time $O(dt)=\widetilde O(d\Delta^2/\varepsilon^2)$ is considerably increased, their ``for all pairs'' guarantee is stronger than the ``for each pair'' guarantee achieved by the methods in \Cref{tbl:main}.

\subsection{Our Method}
Our starting point is Fastfood, which improves the dependence on $d,\varepsilon$ at the expense of introducing the dependence on diameter in the term $\Delta^2/\varepsilon^2$. A natural strategy for improvement is to add a preprocessing step that controls the diameter. At the same time, it must not distort point distances in a way that would distort the kernel estimate. 

A useful observation is that while we may need to accurately preserve ``small'' distances, we can afford to only loosely preserve ``large'' ones. If $x,y\in\R^d$ are at distance $\norm{x-y}\geq\sqrt{\log(1/\varepsilon)}$, then $e^{-\norm{x-y}_2^2}\leq\varepsilon$. Thus, to ensure it is approximated it up to $\pm\varepsilon$, we do not need to preserve $\norm{x-y}_2^2$; we need only ensure that it remains larger than 
$\sqrt{\log(1/\varepsilon)}$ and does not ``collapse'' to be any smaller. 

Thus, we need preprocessing that \emph{(i)} controls the diameter, \emph{(ii)} preserves ``small'' distances, and \emph{(iii)} keeps ``large'' distances from collapsing. 
Fortunately, this type of embedding has been introduced by \cite{bartal2011dimensionality} for a different set of applications. Their result embeds points in the unit sphere while preserving distances smaller than $\sqrt\varepsilon$ up to $(1\pm\varepsilon)$ and preventing larger distances from collapsing below $\Omega(\sqrt\epsilon)$. By scaling the ``small'' distance threshold, their embedding can be shown to preserve the Gaussian KDE. 

Unfortunately, applying \cite{bartal2011dimensionality}'s embedding costs the same $O(d/\varepsilon^2)$ running time as RFF. In fact, their embedding is identical to RFF: it also has the form $x\mapsto\psi(Wx)$
where $W$ is a full Gaussian matrix of order $d_\varepsilon\times d$. Therefore, it cannot be used to improve the running time of KDE queries beyond what RFF already gives us. 

This discussion suggests the strategy of proving a result similar to \cite{bartal2011dimensionality} but with a shorter running time. This is precisely the main ingredient in our approach, \Cref{thm:frnd}. Since their embedding has the form $\psi(Wx)$, a natural candidate is to replace the Gaussian matrix $W$ with an RHT-based proxy. We show that (a variant of) the Fastfood transform, $\psi(HD_2HD_1x)$, forms an embedding into the unit sphere with the properties we need.  

Having proven the spherical embedding in \Cref{thm:frnd}, we can apply it as a preprocessing step for KDE. To adjust the ``small'' distance threshold from $\sqrt\varepsilon$ to our requisite $\sqrt{\log(1/\varepsilon)}$, we scale the points by $s=\Theta(\sqrt{\varepsilon/\log(1/\varepsilon)})$ on the way into \Cref{thm:frnd} and ``un-scale'' by $s^{-1}$ on the way out. This means that the original diameter $\Delta$ yields the scaled diameter $\Lambda=s\Delta=\widetilde O(\sqrt{\varepsilon}\Delta)$ in \Cref{thm:frnd}, and that the embedded un-scaled points lie on a sphere of radius $s^{-1}$, hence their new diameter is $\widehat\Delta:=2s^{-1}=\widetilde O(1/\sqrt{\varepsilon})$. 
We then apply Fastfood again on the embedded points, this time as intended and analyzed in \cite{le2013fastfood} for the purpose of approximating the KDE (their result is cited here as \Cref{thm:fastfood}). 
Put together, our feature map ultimately takes the form of two iterated Fastfood transforms:
\[
  f_{\mathrm{ours}}(x) = \psi(HD_4HD_3\cdot s^{-1}\psi(HD_2HD_1(sx))) .
\]
The inner Fastfood takes time $m=\widetilde O(d+\Lambda^2+\varepsilon^{-2})=\widetilde O(d+\varepsilon\Delta^2+\varepsilon^{-2})$ by \Cref{thm:frnd}, and the outer Fastfood takes time $\widetilde O(m+\widehat\Delta^{2}/\varepsilon^2)$ by \cite{le2013fastfood}. (The difference is in their different output dimensions.) Since $\widehat\Delta=\widetilde O(1/\sqrt{\varepsilon})$, the time to apply $f_{\mathrm{ours}}$ is $\widetilde O(d+\varepsilon\Delta^2+\varepsilon^{-3})$, and this dominates the query time in \Cref{thm:main}.

The two Fastfood invocations in our method play different roles -- one as a fast spherical embedding per \Cref{thm:frnd}, the other as a fast KDE approximation per \Cref{thm:fastfood}. They have different proofs that require divergent techniques. 
The Fastfood analysis in \cite{le2013fastfood} is based on concentration of Lipschitz functions of Gaussians. While sufficient for proving the kernel approximation guarantee in \Cref{thm:fastfood}, this approach is insufficient for proving the spherical embedding guarantees in \Cref{thm:frnd}, particularly the bounded contraction property in item 2, since lower-bounding the trigonometric functions in Fastfood requires controlling a 4th order term in the random diagonal entries of the RHT. 
Our analysis overcomes this through a new analysis of Fastfood that applies a Wiener chaos decomposition \cite{wiener1938homogeneous} and controls the 4th chaos term. This is done in the contraction proof in \Cref{sec:contraction}.

\subsection{Additional Related Work}\label{sec:priormore}
Orthogonal Random Features (ORF) \cite{yu2016orthogonal,choromanski2017unreasonable,choromanski2018geometry}, Quasi Monte-Carlo features (QMC) \cite{avron2016quasi,huang2024quasi} and quadrature-based features \cite{bach2017equivalence,munkhoeva2018quadrature} offer extensions and alternatives to RFF for kernel approximation with reduced variance. These lines of work are complementary to the worst-case query time bounds we study and yield the same $O(d/\varepsilon^2)$ bound as RFF in this context. Notably, \cite{yu2016orthogonal} proposed Structured ORF (SORF), a triple-iterated RHT method ($\psi(SHD_3HD_2HD_1x)$ in the notation from \Cref{sec:overview}), as a heuristic alternative to ORF, and showed it performs well empirically. The triple-RHT heuristic was also proposed in \cite{andoni2015practical} for locality sensitive hashing (LSH).
Kernel Nystr\"om methods \cite{rudi2015less,gittens2016revisiting,musco2017recursive} produce data-dependent features that control the error based on properties of the kernel matrix, though not in the worst case. 

KDE coresets are a well-studied tool for reducing the data size $|X|$ to a size independent of its original value \cite{lopez2015towards,lacoste2015sequential,karnin2019discrepancy,phillips2020near,dwivedi2024kernel}. As mentioned earlier, they are useful as preprocessing in our method to avoid any dependence on $|X|$ in the query time.

KDE data structures with relative error were pioneered by \cite{charikar2017hashing} and widely studied since \cite{siminelakis2019rehashing,backurs2018efficient,backurs2019space,charikar2020kernel,kapralovfaster}. This stronger error guarantee entails longer query times than those in \Cref{tbl:main}. \cite{backurs2024efficiently} proposed hybrid additive-relative error KDE data structures via preprocessing relative error KDEs with FJLT. The same can be done with our spherical embedding, \Cref{thm:frnd}, in place of FJLT, for improved hybrid running times.

KDE data structures have also recently been used for approximate kernel matrix-vector multiplication, with applications to efficient Attention in long-context transformer-based deep learning architectures \cite{choromanski2020rethinking,backurs2021faster,zandieh2023kdeformer,bakshisubquadratic,shah2025even,carrell2025thinformer,indyk2025improved}.

\section{Proof of \Cref{thm:frnd}:  Fast Spherical Embedding}\label{sec:frnd}

\subsection{Preliminaries}\label{sec:prelim}
\paragraph{Assumptions.}
We start with assumptions that will simplify the analysis without limiting generality. We set $m=\widetilde O(d+\Lambda^2+\varepsilon^{-2})$
where the $\widetilde O(\cdot)$ notation hides a $\log^{O(1)} (d\Lambda/(\epsilon\eta))$ term of sufficiently high degree and $m$ is rounded up to a power of $2$. We also assume w.l.o.g.~that $d=m$ by zero-padding the input points up to dimension $m$. 
The target sphere will be $\mathbb{S}^{2m-1}$. 
It suffices to prove the theorem assuming $\varepsilon\leq\varepsilon_0$ where $\varepsilon_0$ is a sufficiently small constant. 
Finally, we will prove the theorem with $O(\varepsilon),O(\eta)$ instead of $\varepsilon,\eta$. This does not change the theorem as hidden constants can be rescaled. 

\paragraph{Fastfood.}
We use a variant of the Fastfood transform due to \cite{le2013fastfood}. 
Let $H$ be the normalized Hadamard matrix of order $m\times m$. 
It has entries in $\pm1/\sqrt m$, satisfies $H^TH=I$ (where $I$ is the order-$m$ unit matrix), and for every $x\in\R^m$ the matrix-vector product $Hx$ can be computed in time $O(m\log m)$ with the Walsh-Hadamard transform.

Let $G=\diag{g}\in\R^m$ be a diagonal matrix with random i.i.d.~Gaussian entries $g_j\sim N(0,1)$. 
Let $B\in\R^m$ be a random diagonal sign matrix where each diagonal entry is uniform in $\{\pm1\}$. 
The Fastfood matrix $V\in\R^{m\times m}$ is
\begin{equation}\label{eq:fastfood}
  V = \sqrt{m}\cdot HGHB .
\end{equation}

\begin{remark}\label{rem:full_fastfoo}
\cite{le2013fastfood} discuss several variants of Fastfood. The full form is $\sqrt{m}\cdot \Sigma HG\Pi HB$ where $\Sigma$ is a random scaling matrix and $\Pi$ is a random permutation matrix. We use the form (\ref{eq:fastfood}) since it is the minimal one and it suffices for proving \Cref{thm:frnd}. Nevertheless, our proof also works with the full Fastfood variant. The matrices $\Sigma,\Pi$ are not necessary for the proof nor they interfere with it.
\end{remark}

The Fastfood map $\Phi:\R^m\rightarrow\R^{2m}$ for which we prove \Cref{thm:frnd} is defined, for every $j=1,\ldots,m$, as
\begin{align*}
   \Phi(x)_{2j-2}= \tfrac1{\sqrt{m}}\cos((V_x)_j) , & \\
    \Phi(x)_{2j-1}= \tfrac1{\sqrt{m}}\sin((V_x)_j) . &
\end{align*}

Note that while $\Phi$ is a randomized map, the following properties hold deterministically:
\begin{itemize}
    \item $\norm{\Phi(x)}_2^2=1$ for every $x\in\R^m$. 
    \item Since $V$ is the product of diagonal and Hadamard matrices, the matrix-vector product $Vx$ for every $x\in\R^m$ can be computed in time $O(m\log m)$. 
\end{itemize}

\paragraph{Notation and basic properties.}
To prove items 1--3 in \Cref{thm:frnd} we fix a pair $x,y\in\R^m$ and denote $z=x-y$. We furthermore denote
\[
  u := HBz = HB(x-y) .
\]
Since both $H$ and $B$ are orthogonal matrices we have
\begin{equation}\label{eq:uz}
    \norm{u}_2 = \norm{z}_2 .
\end{equation}
We now cite some key lemmas from prior work. Let
\begin{equation}\label{eq:lz}
    L_z := \norm{z}_2\sqrt{\frac{2\log(2m/\eta)}{m}}
\end{equation}
\cite{ailon2009fast} proved the following  flattening lemma for the randomized Hadamard transform.

\begin{lemma}[\cite{ailon2009fast}]\label{lmm:mass}
For each $z\in\R^m$, 
    \[ \Pr_B\left[\norm{HBz}_\infty \leq L_z\right] > 1-\eta . \]
\end{lemma}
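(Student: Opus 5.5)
The plan is to control each coordinate of $u=HBz$ separately and then take a union bound over the $m$ coordinates. Write $B=\diag{b}$ with i.i.d.\ Rademacher entries $b_k$. Each coordinate of $u$ is a Rademacher sum
\[
u_j=\sum_{k=1}^m H_{jk}b_kz_k .
\]
Since every entry of $H$ is $\pm1/\sqrt m$, the coefficients $a_k:=H_{jk}z_k$ satisfy
\[
\sum_k a_k^2=\frac{\norm{z}_2^2}{m}.
\]
So each $u_j$ is a weighted Rademacher sum whose variance proxy is $\norm{z}_2^2/m$, and this is the same for every row $j$.

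The key tool is Hoeffding's inequality for Rademacher sums. It gives, for every $t>0$,
\[
\Pr\left[|u_j|>t\right]\leq 2\exp\left(-\frac{t^2}{2\sum_k a_k^2}\right)=2\exp\left(-\frac{mt^2}{2\norm{z}_2^2}\right).
\]
This follows from the moment generating function bound $\E e^{\lambda b_k a_k}=\cosh(\lambda a_k)\leq e^{\lambda^2a_k^2/2}$ together with a Chernoff argument applied to both tails.

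Substituting $t=L_z=\norm{z}_2\sqrt{2\log(2m/\eta)/m}$, the exponent becomes exactly $-\log(2m/\eta)$. Hence each coordinate exceeds $L_z$ in absolute value with probability at most $2\cdot\eta/(2m)=\eta/m$. A union bound over the $m$ coordinates shows that $\norm{HBz}_\infty>L_z$ with probability at most $\eta$, which is the claim.

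The argument has two edge cases and no real obstacle. If $z=0$ the statement is trivial. Strictness of the inequality costs nothing: the Hoeffding bound is in fact strict unless the sum is degenerate, and the constant $2$ inside the logarithm in \eqref{eq:lz} leaves slack. The only care needed is bookkeeping of the constants, so that the factor $2$ from the two-sided tail matches the $2m$ inside the logarithm in the definition of $L_z$.
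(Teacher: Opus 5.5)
Your proof is correct and is exactly the standard Ailon--Chazelle argument that the paper cites without reproving: a Chernoff/Hoeffding bound via $\cosh(x)\le e^{x^2/2}$ on each Rademacher sum $u_j$ with variance proxy $\norm{z}_2^2/m$, followed by a union bound over the $m$ coordinates. One small correction to your closing remark: the $2$ inside $\log(2m/\eta)$ gives no slack, since it is exactly absorbed by the two-sided tail and yields precisely $\eta/m$ per coordinate. The strict inequality therefore rests only on $\cosh(x)<e^{x^2/2}$ for $x\neq 0$ when $z\neq 0$, together with the trivial case $z=0$.
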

We condition on this event throughout, losing an additive $\eta$ in the total probability. We thus have
\begin{equation}\label{eq:mass}
    \norm{u}_\infty \leq L_z .
\end{equation}
This is the only property we need of $B$. Henceforth we consider $B$ as fixed and satisfying \Cref{eq:mass}. The only source of randomness in the remainder of the proof is the diagonal Gaussian matrix $G=\diag{g}$.

We will also need the following properties of Fastfood, proven in \cite{le2013fastfood}.
\begin{lemma}[\cite{le2013fastfood}]\label{lmm:fastfood_gaussian}
    For every $j=1,\ldots,m$, $(Vz)_j$ is distributed like $N(0,\norm{z}_2^2)$.
\end{lemma}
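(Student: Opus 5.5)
We prove \Cref{lmm:fastfood_gaussian} by computing $(Vz)_j$ explicitly and then conditioning on $B$. Write $u=HBz$ as before, so that $Vz=\sqrt m\,HGu$. Every entry of $H$ is $\pm1/\sqrt m$, so for each fixed $j$
\[
  (Vz)_j=\sqrt m\sum_{k=1}^m H_{jk}\,g_k\,u_k=\sum_{k=1}^m \epsilon_{jk}\,u_k\,g_k ,
\]
where $\epsilon_{jk}:=\sqrt m H_{jk}\in\{\pm1\}$ are deterministic signs.

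Fix $B$, so that $u$ is a fixed vector. Then $(Vz)_j$ is a fixed linear combination of the i.i.d.\ standard Gaussians $g_1,\ldots,g_m$. A linear combination of independent centered Gaussians is a centered Gaussian whose variance is the sum of the squared coefficients. Here that variance is
\[
  \sum_k \epsilon_{jk}^2u_k^2=\norm{u}_2^2=\norm{z}_2^2 ,
\]
where the last equality is \Cref{eq:uz}, which holds deterministically because $H$ and $B$ are orthogonal.

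The conditional law of $(Vz)_j$ given $B$ is therefore $N(0,\norm{z}_2^2)$. This law does not depend on $B$, so the unconditional law is the same. In particular, the statement also holds conditionally on the flattening event \Cref{eq:mass}, which is how it is used later in the proof.

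The only point needing care is that the $\pm1$ pattern of row $j$ of $H$ does not affect the variance, since each coefficient enters squared. No other step presents an obstacle.
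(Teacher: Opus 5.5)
Your proof is correct and takes the same approach as the paper. The paper cites this lemma from \cite{le2013fastfood}, but inside the proof of Lemma~\ref{lmm:fz_expectation} it verifies it by your computation: for fixed $B$ it writes $(Vz)_j=\sqrt m\sum_i H_{ji}u_ig_i$ and obtains variance $m\sum_i H_{ji}^2u_i^2=\norm{u}_2^2=\norm{z}_2^2$, valid for every fixed $B$, including under the flattening event, just as you note.
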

Recall that a function $f:\R^m\rightarrow\R$ is called $L$-Lipschitz if $|f(g')-f(g'')|\leq L\cdot\norm{g'-g''}_2$ for all $g',g''\in\R^m$.
\begin{lemma}[\cite{le2013fastfood}]\label{lmm:fastfood_lip}
    Let $f_z:\R^m\rightarrow\R$ be defined as $f_z(g)=\frac1m\sum_{j=1}^m\cos((Vz)_j)$. Then $f_z$ is $L_z$-Lipschitz.
\end{lemma}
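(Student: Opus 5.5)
The plan is to compute the gradient of $f_z$ with respect to $g$ explicitly and bound its Euclidean norm uniformly. Since $f_z$ is smooth, the mean value theorem then gives $|f_z(g')-f_z(g'')|\le \sup_g\norm{\nabla f_z(g)}_2\cdot\norm{g'-g''}_2$, so it suffices to show $\norm{\nabla f_z(g)}_2\le L_z$ for every $g\in\R^m$. Throughout, $B$ is fixed and satisfies \Cref{eq:mass}; this is the only place the flattening is needed.

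First, rewrite $Vz$ as a linear function of $g$. With $u=HBz$ we have $Vz=\sqrt m\, HGu=\sqrt m\, H\,\diag{u}\,g$, because $G=\diag{g}$ and diagonal matrices commute in the sense $\diag{g}u=\diag{u}g$. Hence $(Vz)_j=\sqrt m\sum_k H_{jk}u_kg_k$.

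Next, differentiate. Write $s\in\R^m$ for the vector with $s_j=\sin((Vz)_j)$. Then
\[
\frac{\partial f_z}{\partial g_k}(g)=-\frac1m\sum_{j=1}^m s_j\sqrt m\,H_{jk}u_k ,
\]
which in vector form reads $\nabla f_z(g)=-\tfrac{1}{\sqrt m}\,\diag{u}\,H^Ts$.

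Then bound the norm. We have $\norm{\diag{u}w}_2\le\norm{u}_\infty\norm{w}_2$, and $H$ is orthogonal, so $\norm{H^Ts}_2=\norm{s}_2\le\sqrt m$ since every $|s_j|\le1$. Together these give $\norm{\nabla f_z(g)}_2\le\norm{u}_\infty\le L_z$ by \Cref{eq:mass}, which is the claim.

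There is no real obstacle here. The only point requiring care is that the Lipschitz constant uses the $\ell_\infty$ norm of $u$ rather than its $\ell_2$ norm. This is exactly what yields the $1/\sqrt m$ factor hidden in $L_z$, and it is why the conditioning on \Cref{lmm:mass} is essential.
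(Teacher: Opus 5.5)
Your proof is correct: the gradient bound $\norm{\nabla f_z(g)}_2\le\norm{u}_\infty\le L_z$ is exactly what is needed, and it holds because $B$ is fixed on the event of \Cref{lmm:mass}. The paper does not reprove this lemma but cites it from \cite{le2013fastfood}, whose argument is the same in substance, phrased with finite differences rather than a gradient: cosine is $1$-Lipschitz, and then $\frac1m\norm{V'z-V''z}_1\le\norm{\diag{u}(g'-g'')}_2\le\norm{u}_\infty\norm{g'-g''}_2$ by Cauchy--Schwarz and orthogonality of $H$.
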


\subsection{Distance Expansion}
In this section we prove item 1 in \Cref{thm:frnd}, that $\Phi$ does not expand squared distances by more than $(1+\varepsilon)$ with high probability.
We start with following fact which follows from standard trigonometric identities.
\begin{lemma}\label{lmm:baseform} Deterministically for every supported $V$,
    \[ \norm{\Phi(x)-\Phi(y)}_2^2 = \frac{2}{m}\sum_{j=1}^{m}\left(1 - \cos\left((Vz)_j\right)\right) . \]
\end{lemma}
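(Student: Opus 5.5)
The identity follows by expanding the squared norm coordinate pair by coordinate pair and then applying the cosine difference formula. We use the definition of $\Phi$: the $j$-th pair of coordinates of $\Phi(x)$ is $\tfrac{1}{\sqrt m}(\cos a_j,\sin a_j)$ with $a_j:=(Vx)_j$. Likewise, for $\Phi(y)$ it is $\tfrac{1}{\sqrt m}(\cos b_j,\sin b_j)$ with $b_j:=(Vy)_j$.

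For each fixed $j$, the contribution of the $j$-th pair to $\norm{\Phi(x)-\Phi(y)}_2^2$ is
\[
\frac1m\left[(\cos a_j-\cos b_j)^2+(\sin a_j-\sin b_j)^2\right].
\]
Expanding and using $\cos^2+\sin^2=1$ twice gives
\[
\frac1m\left[2-2(\cos a_j\cos b_j+\sin a_j\sin b_j)\right].
\]
By the cosine difference formula this equals $\frac{2}{m}\left(1-\cos(a_j-b_j)\right)$.

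The last step uses only the linearity of $V$, which is a fixed matrix once $G$ and $B$ are sampled. It gives $a_j-b_j=(Vx)_j-(Vy)_j=(V(x-y))_j=(Vz)_j$. Summing over $j=1,\ldots,m$ yields the stated identity.

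No probabilistic input is used anywhere, so the identity holds deterministically for every realization of $V$. There is no real obstacle; the only care needed is to match the indexing of the cosine/sine pairs in the definition of $\Phi$ so that each $j$ contributes exactly one pair.
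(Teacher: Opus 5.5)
Your proposal is correct and follows the paper's proof: expand each sine/cosine pair, use $\cos^2+\sin^2=1$ and the cosine difference formula, then use linearity of $V$ to get $(Vx)_j-(Vy)_j=(Vz)_j$. No probabilistic input is needed, so the identity holds for every realization of $V$, as you note.
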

\begin{proof}
\begin{align*}
    \norm{\Phi(x)-\Phi(y)}_2^2
    &= \frac{1}{m}\sum_{j=1}^{m}\left(\left(\cos((Vx)_j) - \cos((Vy)_j)\right)^2 + \left(\sin((Vx)_j) - \sin((Vy)_j)\right)^2\right) \\
    &= \frac{1}{m}\sum_{j=1}^{m}\left(2 - 2\cos((Vx)_j)\cos((Vy)_j)-2\sin((Vx)_j)\sin((Vy)_j)\right) \\
    &= \frac{1}{m}\sum_{j=1}^{m}\left(2 - 2\cos\left((Vx)_j-(Vy)_j\right)\right) \\
    &= \frac{2}{m}\sum_{j=1}^{m}\left(1 - \cos\left((Vz)_j\right)\right) ,
\end{align*}
where the third equality is by a standard trigonometric sum-product identity, and the fourth equality is since $(Vx)_j-(Vy)_j=(V(x-y))_j=(Vz)_j$.
\end{proof}

By a Taylor expansion, $1-\cos(\theta)\leq\tfrac12\theta^2$ for all $\theta$. Therefore, by Lemma \ref{lmm:baseform},
\begin{equation}\label{eq:def_q}
    \norm{\Phi(x)-\Phi(y)}_2^2
    \leq \frac{1}{m}\sum_{j=1}^{m}\left((Vz)_j\right)^2 =: Q(z) .
\end{equation}
Observe that, since $H$ is orthogonal,
\[
Q(z) = \frac{1}{m}\norm{Vz}_2^2 = \norm{HGu}_2^2 = \norm{Gu}_2^2 = \sum_{j=1}^mg_j^2u_j^2 .
\]
Therefore, since $\forall_jg_j\sim N(0,1)$ and by \Cref{eq:uz},
\[
  \E[Q(z)] = \norm{u}_2^2 = \norm{z}_2^2 .
\]
Let $X_j=g_j^2-1$. Then $X_j$ is zero-centered and subexponential, and $\sum_{j=1}^mu_j^2X_j=Q(z)-\E[Q(z)]$. By the subexponential Bernstein inequality (e.g., \cite[Theorem 2.9.1]{vershynin2018high}, for a universal constant $c>0$ and all $t\geq0$,
\[
\Pr\left[\left|\sum_{i=1}^mu_j^2X_j\right| > t\right] \leq \exp\left(-c\cdot \min\{\frac{t^2}{\sum_ju_j^4},\frac{t}{\norm{u}_\infty^2}\}\right) 
\]

We take $t=\varepsilon \norm{z}_2^2$.
Using \Cref{eq:uz,eq:mass}, we have,
\begin{equation}\label{eq:u_fourth}
  \sum_ju_j^4 \leq \norm{u}_\infty^2\sum_ju_j^2 = \norm{u}_\infty^2\norm{u}_2^2 \leq L_z^2\norm{z}_2^2 .
\end{equation}
Therefore,
\[ \frac{t^2}{\sum_ju_j^4} \geq \frac{\varepsilon^2\norm{z}_2^4}{L_z^2\norm{z}_2^2} = \frac{\varepsilon^2\norm{z}_2^2}{L_z^2} \quad \text{and} \quad \frac{t}{\norm{u}_\infty^2} \geq \frac{\varepsilon \norm{z}_2^2}{L_z^2} , \]
and the quantity on the left is smaller since $\varepsilon<1$. Plugging this with \Cref{eq:lz} into the Bernstein inequality, 
\[
  \Pr\left[\left|Q(z) - \norm{z}_2^2\right|\leq\varepsilon\norm{z}_2^2\right] \geq 1- \exp\left(- \frac{c\varepsilon^2m}{2\log(m/\eta)}\right) . 
\]
Recalling our setting of $m$, we get
\begin{equation}\label{eq:expansion}
  \Pr\left[(1-\varepsilon)\norm{z}_2^2 \leq Q(z) \leq (1+\varepsilon)\norm{z}_2^2 \right] \geq 1- \eta . 
\end{equation}
Item 1 of \Cref{thm:frnd} follows from \Cref{eq:def_q,eq:expansion}.

\subsection{Distance Contraction}\label{sec:contraction}
In this section we prove item 2 in \Cref{thm:frnd}, that $\Phi$ does not contract small squared distances by more than $(1-\varepsilon)$ with high probability, through a Wiener chaos analysis.

By a Taylor expansion, $1-\cos(\theta)\geq\tfrac12\theta^2-\tfrac1{24}\theta^4$. Therefore, using Lemma \ref{lmm:baseform},

\begin{equation}\label{eq:contraction_lb}
    \norm{\Phi(x)-\Phi(y)}_2^2
    \geq  \frac{1}{m}\sum_{j=1}^{m}\left((Vz)_j\right)^2 - \frac1{12m}\sum_{j=1}^{m}\left((Vz)_j\right)^4 = Q(z) - \frac1{12}W(z), 
\end{equation}
where we have denoted 
\[
  W(z) := \frac{1}{m}\sum_{j=1}^{m}\left((Vz)_j\right)^4 .
\]

We have already controlled the deviation of $Q(z)$ in \Cref{eq:expansion}, so we focus on bounding the deviation of $W(z)$. 

To this end, we use its Wiener chaos decomposition. 
We briefly review some basics: 
the $k$th Wiener chaos is the closure of homogeneous degree-$k$ polynomials in a Gaussian vector $Z$. It is spanned by multivariate Hermite polynomials of total degree $k$.
An $L^2$-function of $Z$ can be uniquely decomposed as the sum of chaos terms. For more background on the subject, see, e.g., \cite{sullivan2015introduction,hairer2021introduction}.

For our purpose, let $h_k(\cdot)$ denote the $k$th ``probabilist's'' Hermite polynomial. Recall that the 2nd and 4th ones are
\[
  h_2(t) = t^2-1 \quad\quad \text{and} \quad\quad h_4(t) = t^4 - 6t^2 + 3 .
\]
They satisfy the identity
\begin{equation}\label{eq:hermite24}
  t^4-3 = 6h_2(t) + h_4(t).
\end{equation} 
It will be convenient to normalize the entries of $Vz$. 
Denote
\[
Z_j=\frac{1}{\norm{z}_2}(Vz)_j \quad \text{and} \quad Y=\frac1m\sum_{j=1}^mZ_j^4 = \tfrac{1}{\norm{z}_2^4}W(z).
\]
By Lemma \ref{lmm:fastfood_gaussian}, $\forall_jZ_j\sim N(0,1)$. 
Recalling that the 4th moment of $N(0,1)$ is $3$, we have $\E[Y]=3$. Setting $t=Z_j$ and averaging over $j=1,\ldots,m$ in \Cref{eq:hermite24}, we get
\begin{equation}\label{eq:chaos_y}
  Y - \E[Y] = \frac{6}{m}\sum_{j=1}^mh_2(Z_j) + \frac{1}{m}\sum_{j=1}^mh_4(Z_j) .
\end{equation}
This is the Wiener chaos decomposition of $Y$, and it has only a 2nd and a 4th chaos term.
We denote them by
\[
  Y_2:= \frac{6}{m}\sum_{j=1}^mh_2(Z_j) \quad\text{and}\quad Y_4 := \frac{1}{m}\sum_{j=1}^mh_4(Z_j) .
\]
$Y_2$ can be controlled through the same Bernstein analysis from the previous section. Observe that
\[
    Y_2 = \frac{6}{m}\sum_{j=1}^m\left(Z_j^2-1\right) = \frac{6}{\norm{z}_2^2}\left(Q(z)-\norm{z}_2^2\right),
\]
hence by \Cref{eq:expansion},
\begin{equation}\label{eq:2nd_chaos}
    \Pr\left|[Y_2|\leq 6\varepsilon\right] > 1-\eta .
\end{equation}

To control $Y_4$ we use a concentration bound for Wiener chaoses, which follows from their hypercontractivity. 
\begin{theorem}\label{cor:chaos_concentration}
Let $X$ be a random variable in the $k$-th Wiener chaos. For all $q>1$ and $t>e\cdot (\E[|X|^q])^{1/q}$,
\begin{equation}\label{eq:app_chaos_conc}
  \Pr\left[|X|>t\right] < \exp\left(-1-(q-1)\cdot\left(\frac{t}{e(\E[|X|^q])^{1/q}}\right)^{2/k}\right) .
\end{equation}
\end{theorem}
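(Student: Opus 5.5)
The approach is the standard route from moment bounds to tails: Markov's inequality applied to a high moment of $|X|$, with the high moment controlled by Gaussian hypercontractivity. The key input is the hypercontractive inequality for the $k$-th Wiener chaos, in its Nelson--Bonami form. For $X$ in the $k$-th chaos and $1<q\le p$,
\[
  (\E[|X|^p])^{1/p} \leq \left(\frac{p-1}{q-1}\right)^{k/2}(\E[|X|^q])^{1/q}.
\]
This follows from applying the Ornstein--Uhlenbeck semigroup $P_s$ with $e^{-2s}=(q-1)/(p-1)$. That semigroup acts on the $k$-th chaos as multiplication by $e^{-ks}$, and Nelson's theorem states that it is a contraction from $L^q$ to $L^p$. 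We take this as the known starting point.

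Write $M_q:=(\E[|X|^q])^{1/q}$. For any $p\ge q$, Markov's inequality together with hypercontractivity gives
\[
  \Pr[|X|>t] \leq \frac{\E[|X|^p]}{t^p} \leq \left(\frac{(p-1)^{k/2}M_q}{(q-1)^{k/2}\,t}\right)^p .
\]
We optimize by choosing $p$ so that the base equals $e^{-1}$. That is, we set
\[
  p-1 = (q-1)\left(\frac{t}{eM_q}\right)^{2/k}.
\]
The condition $t>eM_q$ guarantees $p>q$, so the hypercontractive estimate applies. The bound then becomes
\[
  e^{-p} = \exp\left(-1-(q-1)\left(\frac{t}{eM_q}\right)^{2/k}\right),
\]
which is exactly \eqref{eq:app_chaos_conc}.

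The remaining obstacle is bookkeeping rather than conceptual difficulty. First, we must handle the case where $p$ is not an integer; this causes no trouble, since both Markov's inequality and hypercontractivity hold for real $p\ge q>1$. Second, we need strict inequality in the conclusion. Markov's inequality gives $\Pr[|X|>t]\le\E|X|^p/t^p$, and strictness can be obtained as follows. If $X$ is nonzero, it is not supported on $\{|X|=t\}\cup\{0\}$ in a way that makes Markov tight, because chaos variables are polynomials in Gaussians and therefore have continuous distributions unless they are constant. If $X\equiv 0$, the left-hand side is $0$ and the claim is trivial.
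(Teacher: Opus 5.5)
Your proof is correct and follows the paper's argument exactly: Markov's inequality applied to $|X|^p$, the hypercontractive bound $M_p\le ((p-1)/(q-1))^{k/2}M_q$, and the choice $p=1+(q-1)(t/(eM_q))^{2/k}$, which makes the base equal to $e^{-1}$. Your strictness discussion can be simplified and the continuity claim is not needed: because the event is $\{|X|>t\}$ rather than $\{|X|\ge t\}$, Markov's inequality is already strict whenever $X$ is not a.s.\ zero, while for $X\equiv 0$ the right-hand side degenerates ($M_q=0$) rather than making the claim trivial, an edge case the paper also ignores.
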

\begin{proof}
    We state the hypercontractivity theorem for Wiener chaoses:
\begin{theorem}[e.g., Theorem 7.1 in \cite{hairer2021introduction}]\label{thm:chaos_hypercontractivity}
    Let $X$ be a random variable in the $k$th Wiener chaos. Then for all $p,q\in(1,\infty)$ with $\frac{p-1}{q-1}\geq1$, 
    \[
      (\E[|X|^p])^{1/p} \leq \left(\tfrac{p-1}{q-1}\right)^{k/2}(\E[|X|^q])^{1/q} .
    \]
\end{theorem}
    Denote $M_q:=(\E[|X|^q])^{1/q}$ for brevity. 
    Let $q>1$ and $t>eM_q$. 
    Choose 
    \[ p := 1 + (q-1)\left(\frac{t}{eM_q}\right)^{2/k} . \]
    Note that $q>1$ implies $q-1$, and $t>eM_q$ implies $\frac{p-1}{q-1}\geq 1$. Therefore the conditions of \Cref{thm:chaos_hypercontractivity} are satisfied. Therefore, using Markov's inequality,
    \[
      \Pr\left[|X|>t\right] 
      = \Pr\left[|X|^p>t^p\right]
      < \frac{M_p^p}{t^p}
      = \left(\frac{M_p}{t}\right)^p
      \leq \left(\frac{1}{t}\cdot\left(\frac{p-1}{q-1}\right)^{k/2}\cdot M_q\right)^p
      = e^{-p} .
    \]
    Observing that the right-hand side in \Cref{eq:app_chaos_conc} is $e^{-p}$, \Cref{cor:chaos_concentration} is proven.
\end{proof}

$Y_4$ is in the $k=4$ chaos and we choose $q=2$, obtaining
\begin{equation}\label{eq:y4_tail}
  \Pr\left[|Y_4|>t\right] < \exp\left(-O(1)\cdot\sqrt{\frac{t}{\E[(Y_4^2)]^{1/2}}}\right) . 
\end{equation}
Hence now we need to bound the second moment of $Y_4$. We use a lemma on chaos correlations. 

\begin{lemma}\label{lmm:hermite_orth}
    Let $X\sim N(0,1)$, $Y\sim N(0,1)$ have correlation $\rho_{XY}$. Then for every $k$,
    \[ \E[h_k(X)h_k(Y)] = k!\rho_{XY}^k. \]
\end{lemma}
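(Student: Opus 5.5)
The plan is to use the generating function of the Hermite polynomials together with the Gaussian moment generating function, so that both sides of the claimed identity appear as coefficients of one power series. Recall that $\exp(st-\tfrac12 s^2)=\sum_{k\geq0}h_k(t)\frac{s^k}{k!}$ for the probabilist's Hermite polynomials. Since $(X,Y)$ is a jointly Gaussian pair (as will be the case for the $Z_j$ above), $sX+rY$ is a centered Gaussian with variance $s^2+r^2+2sr\rho_{XY}$. Hence
\[
\E\left[\exp\left(sX-\tfrac12 s^2\right)\exp\left(rY-\tfrac12 r^2\right)\right]
=\exp\left(\tfrac12\left(s^2+r^2+2sr\rho_{XY}\right)-\tfrac12 s^2-\tfrac12 r^2\right)
=\exp\left(sr\rho_{XY}\right).
\]

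The steps are as follows. First, expand both exponentials on the left via the generating function. Interchanging expectation and the double sum is justified by absolute convergence: $\sum_k |h_k(t)|\,|s|^k/k!$ is dominated by a Gaussian-integrable function of $t$. This gives
\[
\sum_{k,\ell}\E[h_k(X)h_\ell(Y)]\,\frac{s^kr^\ell}{k!\,\ell!}.
\]
Second, expand the right side as $\sum_k \rho_{XY}^k\frac{(sr)^k}{k!}$. Third, compare the coefficients of $s^kr^k$, which is valid because both sides are entire functions of $(s,r)$. This yields $\E[h_k(X)h_k(Y)]/(k!)^2=\rho_{XY}^k/k!$, that is, $\E[h_k(X)h_k(Y)]=k!\,\rho_{XY}^k$. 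As a by-product, the off-diagonal terms with $k\neq\ell$ vanish.

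The only point needing care is that the lemma as stated says only that $X$ and $Y$ are each $N(0,1)$ with correlation $\rho_{XY}$. The computation requires joint Gaussianity; without it the claim is false in general. In the intended application, $Z_j$ and $Z_{j'}$ are linear functions of the Gaussian vector $g$ (with $B$ fixed), so they are jointly Gaussian. I would state this assumption explicitly.

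As a sanity check, and as a fallback route, one can write $Y=\rho X+\sqrt{1-\rho^2}\,W$ with $W$ independent of $X$. Then verify the identity directly for $k=2$ and $k=4$, which are the only cases used below: expand $h_k(Y)$ and use $\E[X^{2a}]=(2a-1)!!$. For $k=2$ this gives $\E[(X^2-1)(Y^2-1)]=2\rho^2$, as claimed.
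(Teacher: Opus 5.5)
Your proposal is correct and follows the same route as the paper: you expand $\E[\exp(sX-\tfrac12 s^2+rY-\tfrac12 r^2)]=\exp(sr\rho_{XY})$ using the Hermite generating function and match the coefficients of $(sr)^k$. Your additions (the domination argument for interchanging sum and expectation, and the remark that joint Gaussianity is needed) are valid refinements. The paper leaves both implicit: it uses joint Gaussianity through the joint moment generating function, and the assumption holds for the $Z_j$ because they are linear in $g$.
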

\begin{proof}
The generating function of the probabilist's Hermite polynomials $\{h_n\}$ is
    \begin{equation}\label{eq:hermite_gf}
      \exp(xt-\tfrac12t^2) = \sum_{n=0}^\infty h_n(x)\frac{t^n}{n!} .
    \end{equation}
    From the joint moment generating function of Gaussians we have,
    \[
      \E[\exp(tX+sY)] = \exp\left(\tfrac12t^2+\tfrac12s^2+ts\rho_{XY}\right),
    \]
    which rearranges to
    \[
      \E\left[\exp\left(tX-\tfrac12t^2+sY-\tfrac12s^2\right)\right] = \exp(ts\rho_{XY}).
    \]
    By \Cref{eq:hermite_gf}, the left-hand side is expanded as 
    \[
     \E\left[\exp\left(tX-\tfrac12t^2+sY-\tfrac12s^2\right)\right] =\sum_{n=1}^\infty\sum_{m=1}^\infty \E[h_n(X)h_m(Y)]\frac{t^ns^m}{n!m!} .
    \]
    The right-hand side is expanded as as 
    \[
      \exp(ts\rho_{XY}) = \sum_{k=1}^\infty\frac{(ts\rho_{XY})^k}{k!} .
    \]
    Considering the coefficient of $(ts)^k$ on both sides, we get
    \[
      \E[h_k(X)h_k(Y)]\cdot\frac{1}{(k!)^2} = \frac{\rho_{XY}^k}{k!} ,
    \]
    which rearranges to the lemma statement.
\end{proof}

Let $\Gamma\in\R^{m\times m}$ be the correlation matrix between the $Z_j$s. 
Recall that 
\[
Z_j=\frac1{\norm{z}_2}(Vz)_j=\frac{\sqrt{m}}{\norm{z}_2}(HGu)_j . \]
Then,
\begin{align*}
    \Gamma_{ij} = \E[Z_iZ_j]  
    &= \frac{m}{\norm{z}_2^2}\E\left[\sum_{i'=1}^mH_{ii'}g_{i'}u_{i'}\sum_{j'=1}^mH_{jj'}g_{j'}u_{j'}\right]  \\
    &= \frac{m}{\norm{z}_2^2}\sum_{i'=1}^m\sum_{j'=1}^mH_{ii'}H_{jj'}u_{i'}u_{j'}\E[g_{i'}g_{j'}]  \\
    &= \frac{m}{\norm{z}_2^2}\sum_{\ell=1}^mH_{i\ell}H_{j\ell}u_{\ell}^2 ,
\end{align*}
or written in matrix form,
\begin{equation}\label{eq:corr_matrix}
\Gamma = \frac{m}{\norm{z}_2^2}\cdot H\diag{u^2}H^T . 
\end{equation}
Therefore,
\begin{align*}
    \E[Y_4^2] &= \frac{1}{m^2}\sum_{i=1}^m\sum_{j=1}^m\E[h_4(Z_i)h_4(Z_j)] & \\
    &= \frac{4!}{m^2}\sum_{i=1}^m\sum_{j=1}^m\Gamma_{ij}^4 & \text{Lemma \ref{lmm:hermite_orth}} \\
    &\leq \frac{4!}{m^2}\sum_{i=1}^m\sum_{j=1}^m\Gamma_{ij}^2 & \forall_{ij}|\Gamma_{ij}|\leq1 \\
    &= \frac{4!}{\norm{z}_2^4}\norm{H\diag{u^2}H^T}_F^2 & \text{\Cref{eq:corr_matrix}} \\
    &=  \frac{4!}{\norm{z}_2^4}\norm{\diag{u^2}}_F^2 & \text{$H$ is orthogonal} \\
    &= \frac{4!}{\norm{z}_2^4}\sum_{j=1}^m{u_j^4} & \\
    &\leq \frac{24L_z^2}{\norm{z}_2^2}& \text{\Cref{eq:u_fourth}}\\
    &= \frac{48\log(2m/\eta)}{m} . & \text{\Cref{eq:lz}}
\end{align*}

Plugging this in \Cref{eq:y4_tail} with $t=1$, \Cref{eq:lz} and our setting of $m$ (here it suffices that $m\gg\log^4(1/\eta)$),
\[
\Pr\left[|Y_4|\leq1\right] \geq 1-\eta . 
\]
Plugging this with \Cref{eq:2nd_chaos} in \Cref{eq:chaos_y} we get
$\Pr[|Y|\leq 3 + 6\varepsilon+1]\geq1-2\eta$. Under this event, recalling that $Y=\frac{1}{\norm{z}_2^4}W(z)$, and that in item 2 of \Cref{thm:frnd} we have the premise $\norm{z}_2^2\leq\varepsilon$, we have
\[
  |W(z)| \leq \left(4+6\varepsilon\right)\norm{z}_2^4 \leq O(\varepsilon)\norm{z}_2^2 .
\]
Finally, taking a last union bound with \Cref{eq:expansion} and plugging into \Cref{eq:contraction_lb}, we have with probability $1-3\eta$,
\[
  \norm{\Phi(x)-\Phi(y)}_2^2
    \geq   Q(z) - \tfrac1{12}W(z)
    \geq (1-O(\varepsilon))\norm{z}_2^2,
\]
proving item 2 of \Cref{thm:frnd}.

\subsection{Distance Collapse}\label{sec:collapse}
Lastly, we prove item 3 in \Cref{thm:frnd}, that with high probability $\Phi$ does not collapse squared distances to be less than $\Omega(\varepsilon)$. 
We use a tail bound for Lipschitz functions of Gaussians (see, e.g., \cite[Theorem 5.6]{boucheron2003concentration}) as also used by \cite{le2013fastfood}. Applying this tail bound to $f_z$ which was defined in Lemma \ref{lmm:fastfood_lip}, we have
\begin{equation}\label{eq:collapsenew}
  \forall \; t>0, \;\; \Pr\left[\left|f_z(g)-\E[f_z(g)]\right|>t\right] < 2\exp(-t^2/2L_z^2) .
\end{equation}
\begin{lemma}\label{lmm:fz_expectation}
Let $f_z:\R^m\rightarrow\R$ be the map defined in Lemma \ref{lmm:fastfood_lip}:, $f_z(g)=\frac1m\sum_{j=1}^m\cos((Vz)_j).$ 
For every fixed diagonal sign matrix $B$ in \Cref{eq:fastfood}, it holds that 
\[
\E[f_z(g)]=\exp(-\norm{z}_2^2/2) .
\]
\end{lemma}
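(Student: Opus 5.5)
By linearity of expectation, the plan is to compute $\E[\cos((Vz)_j)]$ for each fixed $j$ and then average over $j$. The key input is Lemma \ref{lmm:fastfood_gaussian}: for the fixed sign matrix $B$, the only randomness left is $g$. Since $(Vz)_j=\sqrt m\,(HGu)_j=\sqrt m\sum_{\ell}H_{j\ell}u_\ell g_\ell$ is a linear combination of the i.i.d.\ standard Gaussians $g_\ell$, it is itself a centered Gaussian. Its variance is $m\sum_\ell H_{j\ell}^2u_\ell^2=\sum_\ell u_\ell^2=\norm{u}_2^2=\norm{z}_2^2$. Here we use $H_{j\ell}^2=1/m$ and \Cref{eq:uz}. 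This is the same computation that gives the diagonal of \Cref{eq:corr_matrix}, and it holds for every fixed $B$, not only on average over $B$.

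It then remains to evaluate $\E[\cos(X)]$ for $X\sim N(0,\sigma^2)$ with $\sigma^2=\norm{z}_2^2$. I would use the Gaussian characteristic function $\E[e^{iX}]=e^{-\sigma^2/2}$ and take real parts. Since $X$ is symmetric, $\E[\sin X]=0$, so $\E[\cos X]=e^{-\sigma^2/2}$. Alternatively, one can expand the cosine in its Taylor series and use the even Gaussian moments $\E[X^{2k}]=\sigma^{2k}(2k-1)!!$; this recovers $\sum_k(-\sigma^2/2)^k/k!$, and Fubini applies by absolute convergence.

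Averaging the identical value $e^{-\norm{z}_2^2/2}$ over $j=1,\ldots,m$ gives the claim. I do not expect any real obstacle. The only point needing care is that the distributional statement in Lemma \ref{lmm:fastfood_gaussian} is conditional on $B$, that is, it uses only the flatness-free identity $\sum_\ell H_{j\ell}^2u_\ell^2=\norm{u}_2^2/m$. I would re-derive this explicitly, as above, so that the lemma holds for every fixed $B$ as stated.
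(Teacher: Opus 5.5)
Your proposal is correct and follows essentially the same route as the paper: for fixed $B$, $(Vz)_j=\sqrt{m}\sum_i H_{ji}u_i g_i$ is distributed as $N(0,\norm{z}_2^2)$ because $H_{ji}^2=1/m$ and $\norm{u}_2=\norm{z}_2$, and the standard identity $\E[\cos X]=e^{-\nu^2/2}$ for $X\sim N(0,\nu^2)$ is then averaged over $j$. Your extra justification of that identity via the characteristic function is an addition; the paper simply cites it as a standard fact.
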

\begin{proof}
    We recall the standard fact that $\E[\cos(X)] = \exp(-\nu^2/2)$ for $X\sim N(0,\nu^2)$. Thus it suffices to establish that $(Vz)_j\sim N(0,\norm{z}_2^2)$ for every $j$. 
 Indeed, we have
 \[ 
 (Vz)_j = \sqrt{m}\sum_{i=1}^mH_{ji}u_i g_i \quad \text{and} \quad g\sim N(0,I) , \]
  hence $(Vz)_j$ is distributed like $N(0, \nu^2)$ with 
\[\nu^2 = m\sum_{i=1}^m(H_{ji}u_i)^2 
= \sum_{i=1}^mu_i^2 = \norm{u}_2^2 = \norm{z}_2^2 . 
\]
We have used the fact that each entry of $H$ is in $\{\pm1/\sqrt{m}\}$ for the second equality, and \Cref{eq:uz} for the last equality.
\end{proof}

Let $t_z = 1 - \tfrac18\varepsilon - e^{-\norm{z}_2^2/2}$. For all $\varepsilon\leq0.75$ it can be checked that $1-\tfrac18\varepsilon>e^{-\varepsilon/2}$. 
Also, under item 3 of \Cref{thm:frnd} we have the premise $\norm{z}_2^2>\varepsilon$. Together these ensure that $t_z > 0$. Therefore, from \Cref{eq:collapsenew} we have the bound
\begin{equation}\label{eq:collapse}
  \Pr\left[f_z(g)<1-\varepsilon/8\right] \geq 1-2\exp(-t_z^2/2L_z^2) .
\end{equation}
Note that Lemma \ref{lmm:baseform} can be written as $\norm{\Phi(x)-\Phi(y)}_2^2=2-2f_z(g)$. Hence, under the event in (\ref{eq:collapse}) we have $\norm{\Phi(x)-\Phi(y)}_2^2\geq\tfrac14\varepsilon$ as needed. So it remains to verify that the probability on the right-hand side of (\ref{eq:collapse}) is at least $1-O(\eta)$. 

We analyze two cases. 
\begin{itemize}
    \item In the first case, $\varepsilon<\norm{z}_2^2\leq1$.
By a Taylor expansion we have $e^{-\theta} < 1 - \theta + \tfrac12\theta^2$, hence
$e^{-\norm{z}_2^2/2} \leq 1 - \frac12\norm{z}_2^2 + \frac18\norm{z}_2^4$.
Since $\norm{z}_2^2\leq1$, this implies 
$e^{-\norm{z}_2^2/2} \leq 1 - \frac38\norm{z}_2^2$. Hence, 
$t_z > \frac14\norm{z}_2^2$. With \Cref{eq:lz}, the failure probability in (\ref{eq:collapse}) is now at most $\exp(-\widetilde\Omega(\norm{z}_2^2\cdot m))$. Since $\norm{z}_2^2>\varepsilon$, it suffices that $m\gg1/\varepsilon$ for it to be at most $\eta$.

\item In the complement case, $1<\norm{z}_2^2\leq\Lambda^2$. From $1<\norm{z}_2^2$ and $\varepsilon<1/2$ we get $t_z\geq1-\tfrac18-e^{-1/2}>1/4$. With \Cref{eq:lz}, the failure probability in (\ref{eq:collapse}) is at most $\exp(-\widetilde\Omega(\norm{z}_2^{-2} m))\leq\exp(-\widetilde\Omega(\Lambda^{-2} m))$. Thus, it suffices that $m\gg\Lambda^2$ for it to be at most $\eta$. 
\end{itemize}
Since our choice of $m$ satisfies the conditions in both cases, item 3 is proven, and the proof of \Cref{thm:frnd} is complete.

\section{Proof of \Cref{thm:main}: Fast Gaussian KDE}\label{sec:mainproof}

As a preprocessing step, to remove any dependence on $|X|$ in the query time bound, we may assume w.l.o.g.~that $|X|=O(\log(1/\delta)/\varepsilon^2)$ by  subsampling it down to that size. Lemma \ref{lmm:coreset} below ensures that with probabbility $1-\delta$ this preserves the KDE for all $y\in\R^d$ up to $\pm\varepsilon$.

\begin{lemma}[Gaussian KDE coreset by random sampling \cite{lopez2015towards,phillips2020near}]\label{lmm:coreset}
    Let $X\subset\R^d$. Let $X'$ be a uniformly random subsample (with replacement) from $X$ of size $O(\log(1/\delta)/\varepsilon^2)$. Then, for the Gaussian kernel $\kk(x,y)=\exp(-\norm{x-y}_2^2/\sigma^2)$ with any bandwidth $\sigma>0$, 
    \[
      \Pr_{X'}\left[\forall\; y\in\R^d,\quad 
      \left|\frac{1}{|X|}\sum_{x\in X}\kk(x,y)-\frac{1}{|X'|}\sum_{x\in X'}\kk(x,y)\right|\leq\varepsilon\right] > 1-\delta .
    \]
\end{lemma}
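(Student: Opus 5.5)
The statement to prove is Lemma~\ref{lmm:coreset}. The plan is a uniform concentration argument over all queries $y\in\R^d$. We view $\kk(\cdot,y)$ as a function class indexed by $y$ and control the uniform deviation of the empirical mean over $X'$ from the mean over $X$. Because the bound $O(\log(1/\delta)/\varepsilon^2)$ carries no $d$ or $\log|X|$ factor, a naive net over $y$ combined with a union bound would lose too much. The intended route is therefore the known $\varepsilon$-sample/discrepancy results for Gaussian kernels, and I would cite the bound exactly as in \cite{lopez2015towards,phillips2020near}. The argument sketched below explains why the bound holds.

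\textbf{Step 1: reduce to a bounded RKHS-type class.} The Gaussian kernel is positive definite with $\kk(x,x)=1$, so there is a feature map $\phi$ into a Hilbert space $\mathcal H$ with $\kk(x,y)=\langle\phi(x),\phi(y)\rangle$ and $\norm{\phi(x)}=1$. Let $\mu=\frac1{|X|}\sum_{x\in X}\phi(x)$ and $\hat\mu=\frac1{|X'|}\sum_{x\in X'}\phi(x)$. The KDE error at $y$ is then $\langle\mu-\hat\mu,\phi(y)\rangle$, and by Cauchy--Schwarz it is at most $\norm{\mu-\hat\mu}_{\mathcal H}$ simultaneously for all $y$. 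This step removes the quantifier over $y$ entirely.

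\textbf{Step 2: concentrate the mean embedding.} Now $\hat\mu$ is an average of $n=|X'|$ i.i.d.\ vectors $\phi(x_i)$ with mean $\mu$ and norm at most $1$. First, $\E\norm{\hat\mu-\mu}^2\le 1/n$, since the cross terms vanish by independence. Second, $\norm{\hat\mu-\mu}$ is a function of the $n$ samples with bounded differences $2/n$, so McDiarmid's inequality gives
\[
\Pr\left[\norm{\hat\mu-\mu}>\tfrac{1}{\sqrt n}+t\right]\le\exp(-nt^2/2).
\]
Taking $n=C\log(1/\delta)/\varepsilon^2$ and $t=\varepsilon/2$ makes $1/\sqrt n\le\varepsilon/2$ and the failure probability at most $\delta$. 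This yields the lemma for every bandwidth $\sigma$, because only positive definiteness and $\kk(x,x)=1$ were used.

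\textbf{Main obstacle.} The only delicate point is Step 1: one must recognize that the RKHS (mean-embedding) inner-product structure gives uniformity over all $y$ for free. Without it, the natural covering argument over $y\in\R^d$ would introduce $d$ or diameter dependence. Once this is seen, the remainder is a standard vector-valued Hoeffding/McDiarmid bound.
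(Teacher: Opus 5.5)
Your proof is correct. The paper does not prove this lemma itself but imports it from \cite{lopez2015towards,phillips2020near}. Your route is essentially the kernel-mean-embedding argument behind those citations: Cauchy--Schwarz against the unit-norm feature $\phi(y)$ removes the quantifier over $y$, and then McDiarmid with bounded differences $2/n$ is applied to $\norm{\hat\mu-\mu}_{\mathcal H}$, whose mean is at most $1/\sqrt n$. The paper's own remark uses Hoeffding only for the weaker ``for each $y$'' version, whereas your argument gives the uniform ``for all $y$'' guarantee at the same sample size. This holds under the usual convention that $\log(1/\delta)$ is read as $1+\log(1/\delta)$, so that $1/\sqrt n\le\varepsilon/2$.
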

\begin{remark}
Lemma \ref{lmm:coreset} is a ``for all'' guarantee: with probability $1-\delta$, the sampled coreset $X'$ preserves the KDE for all $y\in\R^d$ simultaneously. For \Cref{thm:main}, the following weaker ``for each'' guarantee suffices: 
\[
  \text{For each fixed $y\in\R^d$,} \quad\quad \Pr_{X'}\left[\left|\frac{1}{|X|}\sum_{x\in X}\kk(x,y)-\frac{1}{|X'|}\sum_{x\in X'}\kk(x,y)\right|\leq\varepsilon\right] > 1-\delta .
\]
This holds for any kernel $\kk$ that takes values in $[0,1]$ by a direct application of Hoeffding's inequality. 
\end{remark}

To prove \Cref{thm:main}, we will use the following Fastfood concentration result for the Gaussian kernel due to \cite[Theorem 11]{le2013fastfood}.
\begin{theorem}[\cite{le2013fastfood}]\label{thm:fastfood}
    Let $\varepsilon,\eta\in(0,1)$. 
    Let $\kk(x,y)$ be the Gaussian kernel over $\R^m$. 
    There is a randomized map $F:\R^m\rightarrow\R^{\ell}$ with $\ell=\widetilde O(\widehat\Delta^2/\varepsilon^2)$, computable in time $\widetilde O(m+\ell)$, such that for each pair $x,y\in\mathbb{B}^d(\widehat\Delta)$: 
    \[
      \Pr\left[\left|\kk(x,y)-F(x)^TF(y)\right|<\varepsilon\right] > 1-\eta .
    \]
\end{theorem}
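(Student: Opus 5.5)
This is the Fastfood concentration result of \cite[Theorem 11]{le2013fastfood}. Here is how I would reprove it with the tools already developed in \Cref{sec:frnd}.

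\textbf{Construction.} I would take $\ell/m$ independent copies of the Fastfood matrix $V^{(1)},\dots,V^{(r)}$ from \Cref{eq:fastfood}, with $r=\lceil \ell/m\rceil$ blocks, each of size $m$ after zero-padding. I would then set
\[
F(x)=\tfrac{1}{\sqrt{rm}}\bigoplus_{b,j}\bigl(\cos((\sqrt2V^{(b)}x)_j),\sin((\sqrt2V^{(b)}x)_j)\bigr).
\]
The factor $\sqrt2$ matches the bandwidth $\kk(x,y)=e^{-\norm{x-y}^2}$. The running time is $O(rm\log m)=\widetilde O(m+\ell)$.

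\textbf{Reduction to one statistic.} By the computation in Lemma \ref{lmm:baseform}, $F(x)^TF(y)=\frac1r\sum_b f^{(b)}_{\sqrt2 z}(g^{(b)})$ with $z=x-y$. This is the average of the statistics $f_z$ from Lemma \ref{lmm:fastfood_lip}, applied to $\sqrt2z$. Lemma \ref{lmm:fz_expectation} gives $\E f^{(b)}_{\sqrt2z}=e^{-\norm{z}^2}=\kk(x,y)$, so the estimator is unbiased.

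\textbf{Concentration.} I would condition on the flattening event of Lemma \ref{lmm:mass} for every block. Each $f^{(b)}$ is then $L$-Lipschitz in its own Gaussian vector $g^{(b)}$, with $L=O(\norm{z}\sqrt{\log(m r/\eta)/m})$, and the blocks are independent. The average is therefore a Lipschitz function of the concatenated Gaussian vector $(g^{(1)},\dots,g^{(r)})$, with constant $L/\sqrt r$. Gaussian Lipschitz concentration as in \Cref{eq:collapsenew} then gives deviation at most $\varepsilon$ with failure probability at most
\[
2\exp\Bigl(-\varepsilon^2 rm/\widetilde O(\norm{z}^2)\Bigr).
\]
Since $x,y\in\mathbb{B}^d(\widehat\Delta)$, we have $\norm{z}\le\widehat\Delta$. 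Choosing $rm=\ell=\widetilde O(\widehat\Delta^2/\varepsilon^2)$ therefore makes this at most $\eta$. A union bound with the $r$ flattening events finishes the proof.

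\textbf{Main obstacle.} The step needing care is the Lipschitz constant of the block average. A block's statistic is $\frac1m\sum_j\cos(\sqrt m(HGu)_j)$. Its gradient in $g$ equals $-\frac{1}{\sqrt m}\diag{u}H^T\sin(\cdot)$, which has norm at most $\norm{u}_\infty$ because $H$ is orthogonal and the sine vector has norm at most $\sqrt m$. This is why the flattening bound $\norm{u}_\infty\le L_z$ is essential: it yields the $\norm{z}/\sqrt m$ scaling, and hence the $\widehat\Delta^2$ dependence instead of a dependence on $m$.
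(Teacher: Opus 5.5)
Your accuracy and running-time argument is correct, and it is essentially the proof behind the cited result. The paper does not reprove \Cref{thm:fastfood}; it imports it from \cite{le2013fastfood}, whose analysis (as noted in \Cref{sec:overview}) is exactly Gaussian Lipschitz concentration in $g$ after conditioning on the flattening event for $HB$. The individual steps check out:
unbiasedness via Lemma~\ref{lmm:fz_expectation} applied to $\sqrt2 z$, which holds for every fixed $B$;
the gradient bound $\norm{u}_\infty$;
the constant $L/\sqrt r$ for the average of independent blocks;
the union bound over the $r$ flattening events;
and the running time $O(rm\log m)\le O((m+\ell)\log m)$.

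The one point where your construction does not meet the statement is the output dimension. With $r=\lceil \ell/m\rceil$ full blocks, $F$ maps into $\R^{2rm}$, and $2rm\ge 2m$. So whenever $\ell<m$, your output is much longer than the claimed $\ell=\widetilde O(\widehat\Delta^2/\varepsilon^2)$. This is the typical case in the application, e.g.\ when $d\gg\varepsilon^{-3}$.

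What is missing is Fastfood's coordinate-subsampling step; the remark at the end of \Cref{sec:mainproof} says it is built into the cited theorem. You can add it with your own tools by keeping only a set $R$ of $\ell$ rows of the outer Hadamard matrix (any fixed set works). Each kept coordinate $(\sqrt2Vz)_j$ is still $N(0,2\norm{z}_2^2)$ given $B$, so unbiasedness is unchanged. Since $H_RH_R^T=I_\ell$, the gradient of $\frac1\ell\sum_{j\in R}\cos((\sqrt2Vz)_j)$ in $g$ has norm at most $\sqrt{2m/\ell}\,\norm{u}_\infty=O(\norm{z}_2\sqrt{\log(m/\eta)/\ell})$. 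This gives the same tail bound with $\ell$ in place of $rm$, at cost $O(m\log m+\ell)$.

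The omission does not affect \Cref{thm:main} itself. An output of dimension $O(m)$ still costs only $\widetilde O(m)$ per query, as the paper's self-contained algorithm with $4m$ outputs shows.
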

Let $s=\sqrt{\varepsilon/(c\log(1/\varepsilon))}$ where $c\geq1$ is a constant we will choose later. We define the scaled spherical embedding $\Psi(x):=\Phi(sx)/s$, where $\Phi$ is from \Cref{thm:frnd} instantiated with $\Lambda=2s\Delta\leq\sqrt{\varepsilon}\Delta$. We then define the map $K(x):=F(\Psi(x))$, where $F$ is the map from \Cref{thm:fastfood} invoked with diameter $\widehat\Delta=2/s$. 
In both \Cref{thm:frnd,thm:fastfood} we set $\eta=\delta/|X|$ to allow for a union bound over the pairs $x,y$ for a fixed query $y$ and all $x\in X$. 

Our algorithm maps every point $x\in\mathbb{B}^d(\Delta)$ to the random features $K(x)$. At construction time we compute $K(X):=\tfrac{1}{|X|}\sum_{x\in X}K(x)$. At query time, we return $K(X)^TK(y)$. 

\paragraph{Query time.} 
Applying $\Phi$ to a query $y\in\R^d$ takes time $\widetilde O(d+\varepsilon\Delta^2+\varepsilon^{-2})$ by \Cref{thm:frnd}. $\Phi$ outputs points on the unit sphere, hence $\Psi$ outputs points on the sphere with radius $2/s$, hence the target dimension in \Cref{thm:fastfood} is $\ell=\widetilde O(1/(s^2\varepsilon^2))=\widetilde O(\varepsilon^{-3})$. Thus, applying $F$ to $y$ takes time $\widetilde O(d+\varepsilon\Delta^2+\varepsilon^{-3})$, and the inner product $K(X)^TK(y)$ takes time $\widetilde O(\varepsilon^{-3})$. The total query time is thus $\widetilde O(d+\varepsilon\Delta^2+\varepsilon^{-3})$.

\paragraph{Accuracy.} 
The following lemma draws the connection between our spherical embedding and KDE.
\begin{lemma}\label{lmm:frnd2kde}
    For every fixed pair $x,y\in\mathbb{B}^d(\Delta)$,
    \[
\Pr\left[\left|e^{-\norm{\Psi(x)-\Psi(y)}_2^2} - e^{-\norm{x-y}_2^2}\right|<2\varepsilon\right] \geq 1-\eta .
    \]
\end{lemma}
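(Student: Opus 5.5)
We reduce to \Cref{thm:frnd} applied to the scaled points $sx,sy$, whose difference $sz$ (with $z=x-y$) satisfies $\norm{sz}_2^2\le s^2\Delta^2\le\Lambda^2$. Since $\norm{\Psi(x)-\Psi(y)}_2^2=s^{-2}\norm{\Phi(sx)-\Phi(sy)}_2^2$, items 1--3 of \Cref{thm:frnd} translate as follows. With probability $1-\eta$ we have $\norm{\Psi(x)-\Psi(y)}_2^2\le(1+\varepsilon)\norm{z}_2^2$. If $\norm{z}_2^2\le\varepsilon/s^2=c\log(1/\varepsilon)$, then also $\norm{\Psi(x)-\Psi(y)}_2^2\ge(1-\varepsilon)\norm{z}_2^2$. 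Otherwise $\norm{\Psi(x)-\Psi(y)}_2^2\ge\Omega(\varepsilon)/s^2=\Omega(c\log(1/\varepsilon))$. We condition on this single event and split into two cases according to the threshold $T:=c\log(1/\varepsilon)$.

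\emph{Case $\norm{z}_2^2\le T$.} Write $a=\norm{z}_2^2$ and $b=\norm{\Psi(x)-\Psi(y)}_2^2$, so that $|b-a|\le\varepsilon a$. Since $e^{-t}$ is $1$-Lipschitz and decreasing, the mean value theorem gives
\[
|e^{-b}-e^{-a}|\le e^{-\min(a,b)}|a-b|\le e^{-(1-\varepsilon)a}\,\varepsilon a .
\]
The function $a\mapsto a e^{-(1-\varepsilon)a}$ is bounded by $1/((1-\varepsilon)e)\le 1$ for $\varepsilon\le\varepsilon_0$. Hence the difference is at most $\varepsilon<2\varepsilon$. Note this case needs no upper bound on $a$ other than the premise that item 2 applies.

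\emph{Case $\norm{z}_2^2>T$.} Here $e^{-\norm{z}_2^2}<\varepsilon^{c}\le\varepsilon$. By item 3, $b\ge C' c\log(1/\varepsilon)$, where $C'$ is the constant hidden in $\Omega(\varepsilon)$ of item 3 (from the proof, $C'=1/4$). Choosing $c\ge 1/C'$ ensures $e^{-b}\le\varepsilon$. Both exponentials then lie in $[0,\varepsilon]$, so their difference is less than $2\varepsilon$. Strictly we get $\le\varepsilon$, but to obtain a strict inequality we can take $c$ slightly larger, say $c\ge 2/C'$, giving $e^{-b}\le\varepsilon^2<\varepsilon$.

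The only delicate points are bookkeeping. First, the constant $c$ in $s$ must be chosen large relative to the constant in item 3; this is where the choice of $c$ promised earlier gets fixed. Second, we must check that the instantiation $\Lambda=2s\Delta$ covers $\norm{sz}_2\le s\Delta$, which it does since points lie in $\mathbb{B}^d(\Delta)$. Third, item 3 requires $\norm{sz}_2^2>\varepsilon$, which is exactly the case condition. The failure probability is the single $\eta$ from \Cref{thm:frnd} applied to the fixed pair, with the paper's convention that $O(\eta)$ and $O(\varepsilon)$ are rescaled.
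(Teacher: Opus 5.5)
Your proposal is correct and follows the paper's proof essentially step for step. It uses the same split at $\norm{x-y}_2^2 = c\log(1/\varepsilon)$, items 1--2 of \Cref{thm:frnd} plus the mean value theorem (the paper's Lemma~\ref{lmm:mvt}) in the near case, and item 3 with $c$ chosen large against its hidden constant in the far case. Your concern about strictness in the far case is unnecessary: two numbers in $[0,\varepsilon]$ differ by at most $\varepsilon<2\varepsilon$, which is in fact slightly sharper than the paper's triangle-inequality bound of $2\varepsilon$.
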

\begin{proof}
We consider two cases. In the first case, $\norm{x-y}_2^2\leq c\log(1/\varepsilon)$. Hence, $\norm{sx-sy}_2^2\leq\varepsilon$. 
By \Cref{thm:frnd}, we have $\norm{\Phi(sx)-\Phi(sy)}_2^2=(1\pm\varepsilon)\norm{sx-sy}_2^2$. with probability $1-\eta$. Hence, under this event, $\norm{\Psi(x)-\Psi(y)}_2^2=(1\pm\varepsilon)\norm{x-y}_2^2$. By the following lemma, this implies 
$\left|e^{-\norm{\Psi(x)-\Psi(y)}_2^2} - e^{-\norm{x-y}_2^2}\right|<\varepsilon$.

\begin{lemma}\label{lmm:mvt}
Let $\varepsilon\in(0,1-\tfrac1e)$ and let $r_1,r_2\geq0$ be such that $r_2=(1\pm\varepsilon)r_1$. Then $\left|e^{-r_1}-e^{-r_2}\right|<\varepsilon$.
\end{lemma}
\begin{proof}
 By the mean value theorem, there is $\rho\in[\min(r_1,r_2),\max(r_1,r_2)]$ such that $\left|e^{-r_1}-e^{-r_2}\right|\leq e^{-\rho}|r_1-r_2|$. The premise $r_2=(1\pm\varepsilon)r_1$ is equivalent to $|r_1-r_2|\leq\varepsilon r_1$, 
 and $\rho\geq\min(r_1,r_2)$ implies $\rho\geq(1-\varepsilon)r_1$.  
 Hence,
 $e^{-\rho}|r_1-r_2|\leq  \varepsilon r_1e^{-(1-\varepsilon)r_1}\leq \varepsilon\cdot \sup_{r\geq0}re^{-(1-\varepsilon) r} =  \varepsilon\cdot\frac{1}{e(1-\varepsilon)}\leq\varepsilon$. 
\end{proof}

In the second case, $\norm{x-y}_2^2> c\log(1/\varepsilon)$. 
This implies $e^{-\norm{x-y}_2^2}\leq\varepsilon^c\leq\varepsilon$. 
Since $\norm{x-y}_2\leq\Delta$, we also have $\norm{sx-sy}_2^2\leq s^2 \Delta^2\leq\Lambda^2$. Hence, by \Cref{thm:frnd}, $\norm{\Phi(sx)-\Phi(sy)}_2^2\geq\Omega(\varepsilon)$ with probability $1-\eta$. Under this event we have 
\[
\norm{\Psi(x)-\Psi(y)}_2^2\geq\frac{\Omega(\varepsilon)}{s^2}=\Omega(\varepsilon)\cdot\frac{c\log(1/\varepsilon)}{\varepsilon}\geq \log(1/\varepsilon) , \]
where the last inequality holds provided $c$ is chosen as a sufficiently large constant to offset the $\Omega(\cdot)$ in item 3 of \Cref{thm:frnd}. 
This inequality is equivalent to $e^{-\norm{\Psi(x)-\Psi(y)}_2^2}<\varepsilon$. 
Thus,
\[ \left|e^{-\norm{x-y}_2^2} - e^{-\norm{\Psi(x)-\Psi(y)}_2^2}\right|\leq e^{-\norm{x-y}_2^2} + e^{-\norm{\Psi(x)-\Psi(y)}_2^2}\leq2\varepsilon . \] 
Therefore, the conclusion of the lemma holds in both cases.
\end{proof}

To prove \Cref{thm:main}, let $x,y\in\mathbb{B}^d(\Delta)$. By Lemma 
\ref{lmm:frnd2kde} we have $|e^{-\norm{x-y}_2^2} - e^{-\norm{\Psi(x)-\Psi(y)}_2^2}|\leq2\varepsilon$ with probability $1-\eta$. By \Cref{thm:fastfood}, we have $|e^{-\norm{\Psi(x)-\Psi(y)}_2^2} - F(\Psi(x))^TF(\Psi(y))|\leq\varepsilon$ with probability $1-\eta$. 
Together, 
\begin{equation}\label{eq:final}
\Pr\left[\left|e^{-\norm{x-y}_2^2} - K(x)^TK(y)\right|\leq3\varepsilon\right]>1-2\eta .
\end{equation}
Recall we have chosen $\eta$ to allow for a union bound over all pairs $\{(x,y):x\in X\}$. Averaging the event in (\ref{eq:final}) over them and rescaling constants yields \Cref{thm:main}.

\subsection{Self-Contained Algorithm Description}
To enhance clarity, we include below a self-contained algorithmic description of the data structure from \Cref{thm:main}.

\textbf{Setup and notation:}
\begin{itemize}
    \item Let $m=\widetilde O(d+\varepsilon\Delta_\sigma^2+1/\varepsilon^{-3})$, with a sufficiently large hidden $\mathrm{polylog}(d,\Delta_\sigma,\varepsilon^{-1},\delta^{-1})$ factor and such that $m$ is a power of 2. 
    \item Let $s=\Theta(\sqrt{\varepsilon/\log(1/\varepsilon)})$ with a sufficiently small hidden constant. 
    \item For an integer $\ell>0$, let $\psi_\ell:\R^{\ell}\rightarrow\R^{2\ell}$ denote the map from $\ell$ coordinates to their normalized sine/cosine pairs:
    \[
      \forall j=1,\ldots,\ell, \quad\quad \psi_\ell(x)_{2j-2} = \frac{1}{\sqrt\ell}\cos(x_j) \quad \text{and} \quad \psi_\ell(x)_{2j-1} = \frac{1}{\sqrt\ell}\sin(x_j) .
    \]
    \item For an integer $\ell>0$ which is a power of 2, let $h_\ell:\R^\ell\rightarrow\R^\ell$ denote the normalized Walsh-Hadamard transform on $\ell$-dimensional vectors.
    \item For every $x\in\R^d$, denote by $\bar x\in\R^m$ its zero-padding up to $m$ dimensions. 
\end{itemize}

\textbf{Construction:}
\begin{itemize}
    \item Sample random diagonal matrices $B_1\in\R^{m\times m}$ and $B_2\in\R^{2m\times 2m}$ with uniform i.i.d.~signs (Rademachers) on the diagonal, and random diagonal matrices $G_1\in\R^{m\times m}$ and $G_2\in\R^{2m\times 2m}$ with i.i.d.~$N(0,1)$ on the diagonal.
    \item Define the map $K:\R^d\rightarrow\R^{4m}$ as 
    \[
      K(x) := \psi_{2m}(h_{2m}(G_2h_{2m}(B_2\cdot s^{-1}\psi_{m}(h_m(G_1h_m(B_1\cdot s\bar x)))))) .
    \]
    \item Compute and store $\bar{K}(X) := \frac1{|X|}\sum_{x\in X}K(x)$. 
\end{itemize}

\textbf{Query:} Compute and return $\bar{K}(X)^TK(y)$.

\paragraph{Remark.} There is a small difference in the output dimension between the algorithm description here and its implicit description in the proof of \Cref{thm:main} above, though it does not impact the running time (asymptotically).  
Here, we define $K(x)$ as having output dimension $4m$ to simplify the algorithm's presentation. In contrast, in \Cref{sec:mainproof} we invoke \Cref{thm:fastfood} as it appears in \cite{le2013fastfood}, which entails reducing the output dimension of the final Walsh-Hadamard transform to $\widetilde O(\widehat{\Delta}/\varepsilon^2)$ (where in our case $\widehat{\Delta}=1/\sqrt{\varepsilon}$) by randomly sampling coordinates. The subsampling step is already ``baked into'' \Cref{thm:fastfood}, so we keep it there to simplify the proof. In either case, the query time is dominated by evaluating $K(y)$ in time $\widetilde O(m)$, so the asymptotic bound in \Cref{thm:main} remains the same. 

\section{Proof of \Cref{thm:imq}: Inverse Multi-Quadratic Kernels}\label{app:imq}
Let $\kk(x,y)=1/(1+\norm{x-y}_2^2)^\beta$ be the $\beta$-IMQ kernel.

We use the following function approximation theorem due to \cite{beylkin2010approximation}.
\begin{theorem}[\cite{beylkin2010approximation}]\label{thm:imq2gaussiam}
    Let $\beta>0$, $\zeta\in(0,1]$ and $\varepsilon\in(0,1/e]$. There are $h>0$ and integers $M,N$ that satisfy:
    \[
      \forall\; r\in[\zeta,1] \quad \left| r^{-\beta} - \frac{h}{\Gamma(\beta)}\sum_{\ell=M+1}^Ne^{\beta h\ell}e^{-e^{h\ell}r}\right| \leq r^{-\beta}\varepsilon ,
    \]
    and furthermore:
    \begin{itemize}
        \item $h = \Theta(1/(\beta+\log(\varepsilon^{-1})))$.
        \item $N-M \leq \widetilde O((\log(1/\varepsilon) + \log\beta)(\log(1/\zeta) + \beta^{-1}\log(1/\varepsilon)+\log\log(1/\varepsilon)))$. 
        \item $hN \leq h+t^*$ for $t^*=\log\beta + \log(1/\zeta) + \log\log(1/\varepsilon) + O(1)$. 
    \end{itemize}
\end{theorem}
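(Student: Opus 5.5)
The plan is to discretize an integral representation of $r^{-\beta}$ with the trapezoidal rule, after a change of variables that makes the integrand analytic and rapidly decaying. First I would write
\[
r^{-\beta}=\frac{1}{\Gamma(\beta)}\int_0^\infty s^{\beta-1}e^{-sr}\,ds=\frac{1}{\Gamma(\beta)}\int_{-\infty}^{\infty}e^{\beta t}e^{-e^{t}r}\,dt,
\]
using the substitution $s=e^t$. The integrand $F_r(t)=e^{\beta t-e^t r}$ decays like $e^{\beta t}$ as $t\to-\infty$ and doubly exponentially as $t\to+\infty$. The sum in the statement is exactly the trapezoidal rule with step $h$ applied to this integral, truncated to the nodes $t=h\ell$ with $M<\ell\le N$. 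So the total error splits into three parts, and I would make each part at most $\tfrac13\varepsilon r^{-\beta}$:
\begin{itemize}
\item the discretization error of the infinite trapezoidal sum;
\item the left tail, $\ell\le M$;
\item the right tail, $\ell>N$.
\end{itemize}

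\textbf{Discretization.} I would use the classical bound for the trapezoidal rule on functions analytic in a strip $|\mathrm{Im}\,t|<a$: the error is at most about $\frac{2}{e^{2\pi a/h}-1}\sup_{|y|<a}\int|F_r(x+iy)|\,dx$. For $t=x+iy$ we have $|F_r(x+iy)|=e^{\beta x-e^x r\cos y}$, so the strip integral equals $\Gamma(\beta)(r\cos y)^{-\beta}$. Relative to the target $\Gamma(\beta)\,\varepsilon r^{-\beta}$, the requirement becomes $(\cos a)^{-\beta}e^{-2\pi a/h}\lesssim\varepsilon$. Choosing $a$ a constant below $\pi/2$ makes the $(\cos a)^{-\beta}$ factor cost $e^{O(\beta)}$. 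Then $h=\Theta(1/(\beta+\log(1/\varepsilon)))$ suffices, which is the first bullet. The bound is uniform in $r$ since $r$ scales out exactly.

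\textbf{Left tail.} The sum $h\sum_{\ell\le M}e^{\beta h\ell}e^{-e^{h\ell}r}$ is at most $h\sum_{\ell\le M}e^{\beta h\ell}\approx e^{\beta hM}/\beta$. Since $r\le1$ gives $r^{-\beta}\ge1$, it is enough to have $e^{\beta hM}\le\varepsilon\,\beta\,\Gamma(\beta)$. This means $hM\approx-\beta^{-1}\log(1/\varepsilon)$, up to $O(\beta^{-1}\log(1/(\beta\Gamma(\beta))))$, which is lower order given the bounded range of $\beta$ in use.

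\textbf{Right tail.} The worst case is $r=\zeta$. For $t\ge T$ with $e^T\zeta\ge 2\beta$, the integrand is dominated by $e^{-e^t\zeta/2}$, so its tail is doubly exponentially small. Requiring the tail to be at most $\varepsilon\zeta^{-\beta}\Gamma(\beta)$ gives $e^{T}\zeta\approx\beta+\log(1/\varepsilon)$ up to logs. Hence $T=\log\beta+\log(1/\zeta)+\log\log(1/\varepsilon)+O(1)=t^*$, and taking $N=\lceil t^*/h\rceil$ gives $hN\le h+t^*$.

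\textbf{Counting terms.} Finally $N-M=(hN-hM)/h$. This equals $O(\beta+\log(1/\varepsilon))\cdot\bigl(\log(1/\zeta)+\beta^{-1}\log(1/\varepsilon)+\log\log(1/\varepsilon)+\log\beta\bigr)$, matching the second bullet up to the $\widetilde O$ and the stated grouping of factors.

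\textbf{Main obstacle.} I expect the discretization step to be the hardest. The difficulty is making the strip-analyticity error bound rigorous and uniform over $r\in[\zeta,1]$: the integrand must decay uniformly along the strip edges, and the $(\cos a)^{-\beta}$ blow-up must be handled carefully so that $h$ depends on $\beta$ only linearly and not worse. I would first try the standard Poisson-summation formula: the trapezoidal error is $\sum_{k\ne0}\hat F_r(2\pi k/h)$. I would bound $\hat F_r$ by shifting the contour to $\mathrm{Im}\,t=\pm a$, where $\hat F_r$ can be evaluated explicitly as $\Gamma(\beta+i\omega)r^{-\beta-i\omega}$. Stirling's formula then gives $|\Gamma(\beta+i\omega)|/\Gamma(\beta)\lesssim e^{-(\pi/2)|\omega|}\cdot\mathrm{poly}(\beta,|\omega|)$, so $h=\Theta(1/(\beta+\log(1/\varepsilon)))$ makes $\sum_{k\ne 0}$ at most $\varepsilon$.
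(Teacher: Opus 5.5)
The paper does not prove this statement; it cites Beylkin--Monz\'on. Your route is the standard one for this result: the trapezoidal rule for $\frac{1}{\Gamma(\beta)}\int_{\R}e^{\beta t-e^tr}\,dt$, the discretization error controlled through $\hat F_r(\omega)=\Gamma(\beta+i\omega)r^{-\beta-i\omega}$, then two truncations. Your strip/contour-shift step is sound and is not the real difficulty. Shifting to $\mathrm{Im}\,t=\pm a$ gives $|\Gamma(\beta+i\omega)|\le\Gamma(\beta)(\cos a)^{-\beta}e^{-a|\omega|}$, which yields the first bullet uniformly in $r$. Do not swap this for Stirling with a ``$\mathrm{poly}(\beta,|\omega|)$'' prefactor: the actual prefactor, roughly $|\omega|^{\beta-1/2}/\Gamma(\beta)$, is exponential in $\beta$ when $|\omega|\asymp\beta$.

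The genuine gap is in \emph{Counting terms}. Your argument gives $N-M=O\bigl((\beta+\log\tfrac1\varepsilon)(\log\tfrac1\zeta+\beta^{-1}\log\tfrac1\varepsilon+\log\log\tfrac1\varepsilon+\log\beta+1)\bigr)$. The second bullet instead has first factor $\log\tfrac1\varepsilon+\log\beta$. A factor $\beta$ is not polylogarithmic, so your claimed match ``up to $\widetilde O$'' is false once $\beta$ exceeds a polylog of the other parameters.

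No choice of $M,N$ fixes this. Put $u=t-\log(\beta/r)$. Since $\Gamma(\beta+1)\ge\sqrt{2\pi\beta}\,(\beta/e)^\beta$, each normalized summand satisfies
\[
\frac{h}{\Gamma(\beta)}\,r^\beta e^{\beta t-e^tr}=\frac{h\,\beta^\beta e^{-\beta}}{\Gamma(\beta)}\,e^{-\beta(e^u-1-u)}\le h\sqrt{\beta/2\pi}\;e^{-\beta(e^u-1-u)} .
\]
With $h=O(1/\beta)$ from the first bullet, all nodes with $|u|\ge1$ together contribute $e^{-\Omega(\beta)}$. Every summand is positive, and the normalized sum must be at least $1-\tfrac1e$. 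So for large $\beta$ some node $h\ell$ with $M<\ell\le N$ lies within distance $1$ of $\log\beta$ (take $r=1$), and another lies within distance $1$ of $\log\beta+\log\tfrac1\zeta$ (take $r=\zeta$). Hence $h(N-M)\ge\log\tfrac1\zeta-2$, and the first bullet then forces $N-M=\Omega\bigl((\beta+\log\tfrac1\varepsilon)(\log\tfrac1\zeta-2)\bigr)$.

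For $\varepsilon=1/e$ and $\beta=1/\zeta\to\infty$ this is $\Omega(\beta\log\beta)$, whereas the second bullet allows only $\widetilde O(\log^2\beta)$. So the first two bullets cannot hold together for large $\beta$. The first factor $\beta+\log(1/\varepsilon)$ in the bound you actually prove is unavoidable; this is consistent with the source, whose constants depend on $\beta$. Your bound implies the stated bullet only when $\beta$ is at most polylogarithmic (e.g.\ constant). You should state that explicitly rather than assert agreement.

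Two smaller repairs are also needed.
\begin{enumerate}
\item Turning $e^T\zeta\asymp\beta+\log\tfrac1\varepsilon$ into $T=t^*$ uses $\log(\beta+\log\tfrac1\varepsilon)\le\log\beta+\log\log\tfrac1\varepsilon+O(1)$. This requires $\beta$ to be bounded below by a constant. That is fine for the paper's use ($\beta\ge\beta_0$, with the $O(1)$ depending on $\beta_0$), but you should say so.
\item In the right tail, $e^t\zeta\ge2\beta$ does not give domination by $e^{-e^t\zeta/2}$. Even after normalizing by $\zeta^{-\beta}$, the inequality $x^\beta e^{-x}\le e^{-x/2}$ needs $x\gtrsim\beta\log\beta$. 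Moreover, the object you truncate is the sum, not the integral. Both issues are handled as follows. $F_r$ is decreasing beyond its peak $\log(\beta/r)\le\log(\beta/\zeta)<hN$, so $h\sum_{\ell>N}F_r(h\ell)\le\int_{hN}^\infty F_r=\Gamma(\beta)r^{-\beta}\Pr[\mathrm{Gamma}(\beta)\ge e^{hN}r]$. The Chernoff bound $\Pr[\mathrm{Gamma}(\beta)\ge x]\le e^{-x}(ex/\beta)^\beta$ for $x>\beta$ then gives $e^{hN}\zeta=O(\beta+\log\tfrac1\varepsilon)$ with only a constant factor, which is what the $O(1)$ in $t^*$ requires.
\end{enumerate}
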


We proceed with the proof of \Cref{thm:imq}. 
Let $x,y\in\mathbb{B}^d(\Delta)$. Let 
\[ 
\zeta:=\frac1{1+\Delta^2} \quad\quad \text{and} \quad\quad r:=\frac{1+\norm{x-y}_2^2}{1+\Delta^2} . \]
Since $\norm{x-y}_2\leq\Delta$ we have $r\in[\zeta,1]$. Therefore, denoting
\[
\lambda_\ell := \zeta\cdot e^{h\ell}
\quad\quad \text{and} \quad\quad
\alpha_\ell := \frac{\zeta^\beta he^{\beta h \ell-\lambda_\ell}}{\Gamma(\beta)} ,
\]
we have by \Cref{thm:imq2gaussiam},
\begin{equation}\label{eq:imq_one}
  \left| \left(\frac{1}{1+\norm{x-y}_2^2}\right)^\beta -\sum_{\ell=M+1}^N\alpha_{\ell}e^{-\lambda_{\ell}\cdot\norm{x-y}_2^2}\right| < \left(\frac{1}{1+\norm{x-y}_2^2}\right)^\beta\varepsilon \leq \varepsilon .
\end{equation}
Since this holds in particular for $x=y$, we have 
\begin{equation}\label{eq:sumalpha}
\sum_{i=M+1}^N\alpha_i\leq1+\varepsilon .
\end{equation}
Furthermore, by the sum formula for a geometric series,
\begin{equation}\label{eq:lambda_sum}
  \sum_{\ell=M+1}^N\lambda_\ell = \zeta\sum_{\ell=M+1}^Ne^{h\ell} \\
  = \zeta\cdot \frac{e^{h(M+1)}(e^{h(N-M)}-1)}{e^h-1} \\
  \leq \zeta\cdot\frac{e^{hN+h}}{e^h-1} \\
  \leq\zeta
  \cdot \frac{e^{t^*+2h}}{h},
\end{equation}
where $t^*$ is from \Cref{thm:imq2gaussiam} and we have used that $e^h-1\geq h$. 
By \Cref{thm:imq2gaussiam} we have
\[ e^{2h}\leq O(1) \quad\text{and}\quad \frac1h\leq O(\beta+\log(1/\varepsilon))\quad\text{and}\quad e^{t^*}\leq O\left(\frac{\beta\log(1/\varepsilon)}{\zeta}\right) , 
\]
and plugging these in \Cref{eq:lambda_sum}, we get
\begin{equation}\label{eq:sumlambda}
  \sum_{\ell=M+1}^N\lambda_\ell \leq O\left(\beta\log(1/\varepsilon)\cdot(\beta+\log(1/\varepsilon))\right) = \widetilde O(\beta^2)
   . 
\end{equation}
Now suppose we have a dataset $x_1,\ldots,x_n\in\mathbb{B}^d(\Delta)$. We construct a Gaussian KDE data structure $\mathcal{G}_\ell$ from \Cref{thm:main} for every $\ell=M+1,\ldots,N$, with distances scaled by up by $\sqrt{\lambda_\ell}$ (or in other words, with bandwidth $1/\sqrt{\lambda_\ell}$). 
In each $\mathcal{G}_\ell$ we set the prescribed error to $\varepsilon$ and the failure probability to $\delta'=\delta/(N-M)\geq\widetilde O(\delta/(1+\log\beta))$ to allow for a union bound. 

Upon receiving a query $y\in\mathbb{B}^d(\Delta)$ we query all of the $\mathcal{G}_\ell$s. The total query time is 
\begin{align*} 
  \sum_{\ell=M+1}^N\widetilde O\left(d+\varepsilon^{-3}+\varepsilon\Delta^2\lambda_\ell\right)
  &= \widetilde O\left((N-M)\left(d+\varepsilon^{-3}\right)+\varepsilon\Delta^2\sum_{\ell=M+1}^N\lambda_\ell\right)
  \\
  &= \widetilde O\left((1+\log(\beta))\left(d+\varepsilon^{-3}\right)+\varepsilon\Delta^2\beta^2\right), \numberthis\label{eq:imqtime}
\end{align*}
having used \Cref{eq:sumlambda} and the bound on $N-M$ from \Cref{thm:imq2gaussiam}.

For accuracy, let $E_\ell$ denote the Gaussian KDE estimate returned by $\mathcal{G}_\ell$. Then we have,
\begin{align*}
 &\left|\frac1n\sum_{i=1}^n\left(\frac1{1+\norm{x_i-y}_2^2}\right)^\beta - \sum_{\ell=M+1}^N\alpha_\ell E_\ell\right| & \\
 &= \left|\frac1n\sum_{i=1}^n\left(\frac1{1+\norm{x_i-y}_2^2}\right)^\beta - \frac1n\sum_{i=1}^n\sum_{\ell=M+1}^N\alpha_\ell e^{-\lambda_\ell\norm{x_i-y}{_2^2}} + \frac1n\sum_{i=1}^n\sum_{\ell=M+1}^N\alpha_\ell e^{-\lambda_\ell\norm{x_i-y}{_2^2}} - \sum_{\ell=M+1}^N\alpha_\ell E_\ell\right| & \\
 &\leq \frac1n\sum_{i=1}^n\left|\left(\frac1{1+\norm{x_i-y}_2^2}\right)^\beta - \sum_{\ell=M+1}^N\alpha_\ell e^{-\lambda_\ell\norm{x_i-y}_2^2}\right| + \sum_{\ell=M+1}^N\alpha_\ell\left|\frac1n\sum_{i=1}^n\ e^{-\lambda_\ell\norm{x_i-y}_2^2} -  E_\ell\right| \\
 &\leq \varepsilon + \sum_{\ell=M+1}^N\alpha_\ell\left|\frac1n\sum_{i=1}^n e^{-\lambda_\ell\norm{x_i-y}_2^2} -  E_\ell\right| \quad\quad\text{by \Cref{eq:imq_one}.} \numberthis\label{eq:imq_final}
\end{align*}
To bound the second term in (\ref{eq:imq_final}), by \Cref{thm:main} with a union bound over all $\ell$, we have with probability $1-\widetilde O(\delta)$ that
\[
  \forall \;\ell, \quad\quad \left|\frac1n\sum_{i=1}^n e^{-\lambda_\ell\norm{x_i-y}_2^2} -  E_\ell\right| \leq \varepsilon .
\]
Therefore, with probability $1-\widetilde O(\delta)$,
\begin{align*}
    \left|\frac1n\sum_{i=1}^n\left(\frac1{1+\norm{x_i-y}_2^2}\right)^\beta - \sum_{\ell=M+1}^N\alpha_\ell E_\ell\right| &\leq \varepsilon + \sum_{\ell=M+1}^N\alpha_\ell\left|\frac1n\sum_{i=1}^n e^{-\lambda_\ell\norm{x_i-y}_2^2} -  E_\ell\right| & \text{by \Cref{eq:imq_final}} \\
    &\leq \varepsilon + \sum_{\ell=M+1}^N\alpha_\ell\varepsilon & \\
    &\leq \varepsilon + (1+\varepsilon)\varepsilon = O(\varepsilon) & \text{by \Cref{eq:sumalpha}. }
\end{align*}
\Cref{thm:imq} follows by rescaling hidden constants and polylog factors.

\section{Proof of \Cref{thm:dpkde}: Differentially Private KDE}\label{app:dp}
Differential privacy \cite{dwork2006calibrating} has become a central area in algorithms and machine learning. 
DP KDE has been studied in \cite{hall2013differential,wang2016differentially,alda2017bernstein,coleman2020one,wagner2023fast,backurs2024efficiently,wagner2025learning}. We refer to \cite{dwork2014algorithmic} for background and basic definitions of DP. 

To clarify notation, let us emphasize that we focus on pure DP with a single DP parameter $\varepsilon_{\mathrm{DP}}$.\footnote{The extension to approximate DP is straightforward (by replacing the Laplace mechanism with the Gaussian mechanism, see \cite{dwork2014algorithmic}) and is omitted.} The parameter $\delta$ in this section is the failure probability of attaining the requisite approximation guarantee (similarly to its role in \Cref{thm:main}), and not a secondary privacy parameter for approximate DP. 

We focus on DP KDE in the function release model recently studied in \cite{coleman2020one,wagner2023fast,backurs2024efficiently}. It is a variant of Definition \ref{def:kde} in which the output of the construction stage is required to be $\varepsilon_{\mathrm{DP}}$. 

\begin{definition}\label{def:dpkde}
Let $\kk:\R^d\times\R^d\rightarrow\R$ be a kernel.  
    An $(\varepsilon,\delta,\varepsilon_{\mathrm{DP}})$-DP KDE function release mechanism $\mathcal{M}$ takes a finite set $X\subset\R^d$ as input and outputs a function description $\widetilde E(\cdot)$ which is $\varepsilon_{\mathrm{DP}}$-DP w.r.t.~$X$ and such that for each fixed query $y\in\R^d$,   
\[
  \Pr\left[\left|\frac1{|X|}\sum_{x\in X}\kk(x,y)-\widetilde E(y)\right|<\varepsilon\right] > 1-\delta,
\]
where the probability is over the internal randomness of $\mathcal{M}$.
\end{definition}

\cite{wagner2023fast} proposed a framework for DP KDE and applied it to RFF. \cite{backurs2024efficiently} use the same framework to adapt their FJLT+RFF approach to DP KDE. The framework introduces the condition that $|X|\geq O(\log(1/\delta)/(\varepsilon_{\mathrm{DP}}\varepsilon^2))$.\footnote{The intuition for the necessity of the condition is that that dataset $X$ needs to be large enough to enable accurate KDE approximations while preserving the privacy of each element in the dataset.} 
However, the framework relies on the random features being i.i.d. 
This is satisfied for RFF and FJLT+RFF owing to their use of a full random Gaussian matrix, but it does not cover Fastfood nor our methods, which have probabilistically dependent features as a result of the final RHT they apply.

We prove \Cref{thm:dpkde} by showing how to adapt our method from \Cref{thm:main} to DP KDE with essentially the same condition as RFF and FJLT+RFF. The same proof can be applied to Fastfood as well. Our approach is to apply an FJLT transform on our method's output feature vector. Indeed, this means yet another RHT (followed by a sampling and scaling matrix), obtaining a feature map of the form 
\[
  f_{\mathrm{ours-DP}}(x) = SHD_5\cdot \psi(HD_4HD_3\cdot s^{-1}\psi(HD_2HD_1(sx))) ,
\]
where $SHD_5$ is an FJLT transform into $\widetilde O(1/\varepsilon^2)$ coordinates, and the rest of the matrices are as in \Cref{thm:main}. 
We then truncate the entries and apply DP noise. 

The motivation behind this added FJLT is that the KDE methods discussed in this paper ultimately use an inner product estimator over approximate linear features to approximate the KDE. On the one hand, FJLT has the JL dimensionality reduction property, which means it preserves that inner product up to a bounded error with high probability. On the other hand, FJLT has a flattening property (\Cref{lmm:mass}) which controls the magnitude of entries of the vectors it outputs, rendering it useful for applying privacy-preserving noise. 
We now prove this formally.

\begin{proof}[Proof of \Cref{thm:dpkde}]
%
Let $K:\R^d\rightarrow\R^m$ be the feature map from \Cref{thm:main}. It is defined in \Cref{sec:mainproof} and its accuracy guarantee is stated in \Cref{eq:final}. Recall that $m=\widetilde O(d+\varepsilon\Delta_\sigma^2+1/\varepsilon^3)$. 
Let $\delta'=\delta/|X|$. 
For convenience we denote $v_x=K(x)$ for every $x\in\R^d$. 

Let $\ell=O(\log(m/\delta')\log(1/\delta')/\varepsilon^2)$.  
Let $A\in\R^{\ell\times m}$ be the FJLT matrix,  
\[ A=\sqrt{\frac{m}{\ell}}SHD, \]
where $D\in\R^{m\times m}$ is diagonal with i.i.d.~uniformly random signs on the diagonal, and $S\in\R^{\ell\times m}$ has i.i.d.~rows sampled at random from the standard basis in $\R^m$. 
By \cite{ailon2009fast}, this matrix satisfies the JL property, and in particular, for each fixed pair of unit-norm vectors $v,u\in\R^m$ it satisfies
\begin{equation}\label{eq:jlip}
  \Pr_{S,D}\left[\left|v^Tu - (Av)^T(Au)\right| \leq \varepsilon\right] > 1-\delta' .
\end{equation}
We also denote
\[
  L := \sqrt{\frac{2\log(4m/\delta')}{\ell}} = O(1)\cdot\frac{\varepsilon}{\sqrt{\log(1/\delta')}} ,
\]
where the right-hand side equality is by plugging our setting of $\ell$.

Our DP-KDE mechanism works as follows:
\begin{enumerate}
    \item Sample an FJLT matrix $A$ as above. 
    \item For every $x\in X$,
    \begin{enumerate}
        \item Compute $v_x=K(x)$ and $a_x=Av_x$. 
        \item Truncate each entry of $a_x$ into the interval $[-L,L]$. Denote the resulting vector $\widehat{a}_x$. 
    \end{enumerate}
    \item Compute the vector $M_X:=\frac{1}{|X|}\sum_{x\in X}\widehat{a}_x$. 
    \item Add to each entry of $M_X$ noise sampled from $\mathrm{Laplace}(2\ell L/(\varepsilon_{\mathrm{DP}}|X|))$. Call the resulting vector $\widetilde M_X$ and release it together with $K$ and $A$. 
    \item Given a query point $y\in\R^d$, the KDE estimate is $\widetilde E(y) := \widetilde M_X^TAK(y)$. 
\end{enumerate}

Since each $\widehat{a}_x$ is $\ell$-dimensional and satisfies $\norm{\widehat{a}_x}_\infty\leq L$, its
$\ell_1$-sensitivity is $2\ell L$. Therefore, the $\ell_1$-sensitivity of the mean vector $M_X$ is $2\ell L/|X|$. 
It follows from the Laplace DP mechanism that $\widetilde M_X$ is $\varepsilon_{\mathrm{DP}}$-DP. 
We refer to \cite{dwork2014algorithmic} for the standard definition of the $\ell_1$-sensitivity and the guarantee of the Laplace DP mechanism. 
Since $K$ and $A$ are random maps sampled obliviously of any data, they can be released without impacting DP. 
Also observe that computing $\widehat{a}_x$ from $v_x$ takes time $O(m\log m)$ (dominated by the Walsh-Hadamard transform in $A=SHD$), and therefore, the time to compute $M_X$ is dominated by computing $v_x=K(x)$ per $x\in X$. This is the running time from \Cref{thm:main}.

We now prove that the mechanism satisfies the accuracy guarantee from \Cref{def:dpkde}.  
Fix $x\in\R^d$. 
 By the flattening property of FJLT (the rectangular form of Lemma \ref{lmm:mass}, see \cite{ailon2009fast}), we have
\begin{equation}\label{eq:rectmass}
  \Pr_{D}\left[\norm{a_x}_\infty\leq L\norm{v_x}_2 \right] \geq 1-\delta' .   
\end{equation}
Recall that Fastfood (and hence our map $K$) outputs vectors with unit norm (since the entries it outputs are normalized sine/cosine pairs). Hence, $\norm{v_x}_2=1$. Therefore, the event in \Cref{eq:rectmass} implies that $\norm{a_x}_\infty\leq L$, in which case the truncation from $a_x$ to $\widehat{a}_x$ has no effect. In summary, for each fixed $x\in\R^d$ we have
\[
\Pr_{D}\left[\widehat{a}_x= a_x\right] \geq 1-\delta' .   
\]
By a union bound, this occurs for all $x\in X$ simultaneously with probability $1-\delta'|X|=1-\delta$. We condition on this event and fix $y\in\R^d$. 
Then the DP-KDE estimate is
\begin{equation}\label{eq:dpest}  
  \widetilde E(y) := \widetilde M_X^TAK(y) = M_X^Ta_y + N^Ta_y ,
\end{equation}
where $N\in\R^\ell$ has i.i.d.~entries sampled from $\mathrm{Laplace}(b)$ with $b=2\ell L/(\varepsilon_{\mathrm{DP}}|X|)$. Since $N^Ta_y=\sum_{j=1}^\ell a_{y,j} N_j$ is a subexponential random variable, we have by a Bernstein concentration bound,
\[
  \Pr_N\left[ |N^Ta_y| \leq O(1)\cdot b\cdot\left( \norm{a_y}_2\sqrt{\log(2/\delta')} + \norm{a_y}_\infty\log(2/\delta') \right)\right] > 1-\delta' .
\]
%
As discussed above, by the flattening property of FJLT we have with probability $1-O(\delta')$ that $\norm{a_y}_\infty\leq L<1$, and by the JL property of FJLT we have with probability $1-O(\delta')$ that $\norm{a_y}_2^2\leq (1+\varepsilon)\norm{v_y}_2^2=1+\varepsilon<2$.
Therefore, with probability $1-O(\delta')$ we have $|N^Ta_y| \leq \widetilde O(b)$. 

We proceed to the term $M^T_Xa_y$ in \Cref{eq:dpest}. 
First, recall that by \Cref{thm:main}, we have with probability $1-\delta$,
\[
\left|\frac{1}{|X|}\sum_{x\in X}\kk(x,y) - \frac{1}{|X|}\sum_{x\in X}v_x^Tv_y\right| < \varepsilon .
\]
We also condition on the event in \Cref{eq:jlip} simultaneously for all pairs $\{(v_x,v_y):x\in X\}$. By a union bound, this occurs with probability $1-\delta'|X|=1-\delta$. Under this event,
\[
  \left|\frac{1}{|X|}\sum_{x\in X}v_x^Tv_y - \frac{1}{|X|}\sum_{x\in X}(Av_x)^T(Av_y)\right| \leq \frac{1}{|X|}\sum_{x\in X}\left|v_x^Tv_y -(Av_x)^T(Av_y)\right| \leq \frac{1}{|X|}\sum_{x\in X}\varepsilon = \varepsilon.
\]
For the DP KDE estimate $M_X^Ta_y$, we have,
\[
  M_X^Ta_y = \frac1{|X|}\sum_{x\in X}\widehat{a}_x^Ta_y = \frac1{|X|}\sum_{x\in X}a_x^Ta_y= \frac1{|X|}\sum_{x\in X}(Av_x)^T(Av_y) ,
\]
recalling that we are conditioning on $\widehat{a}_x=a_x$ for all $x\in X$. 
Putting everything together, with probability $1-O(\delta)$, we have obtained
\begin{align*}
    &\left|\frac{1}{|X|}\sum_{x\in X}\kk(x,y) - \widetilde E(y)\right| \\
    & \leq \left|\frac{1}{|X|}\sum_{x\in X}\kk(x,y) - M_X^Ta_y\right| +|N^Ta_y| \\
    &\leq \left|\frac{1}{|X|}\sum_{x\in X}\kk(x,y) - \frac1{|X|}\sum_{x\in X}(Av_x)^T(Av_y)\right| +|N^Ta_y| \\
    &\leq \left|\frac{1}{|X|}\sum_{x\in X}\kk(x,y) -\frac{1}{|X|}\sum_{x\in X}v_x^Tv_y\right| + \left|\frac{1}{|X|}\sum_{x\in X}v_x^Tv_y - \frac1{|X|}\sum_{x\in X}(Av_x)^T(Av_y)\right|+|N^Ta_y| \\
    &\leq \widetilde O\left(\varepsilon + \frac{\ell L}{\varepsilon_\mathrm{DP}|X|}\right).
\end{align*}
Note that $\ell L=\widetilde O(1/\varepsilon)$, and therefore the total error is $\widetilde O\left(\varepsilon + \frac{1}{\varepsilon\cdot\varepsilon_\mathrm{DP}|X|}\right)$. Under the premise in \Cref{thm:dpkde} that $|X|\geq\widetilde O(1/(\varepsilon^2\varepsilon_{\mathrm{DP}}))$, the total error is $\widetilde O(\varepsilon)$, and we can rescale it by a polylog factor to become $\varepsilon$ by absorbing the polylog in the $\widetilde O(\cdot)$ notation without changing the statement of the theorem. We similarly scale the failure probability by $\widetilde O(1)$ to $\delta$. This concludes the proof of \Cref{thm:dpkde}.
\end{proof}
The proof for adapting Fastfood to DP is identical except that the map $K$ is replaced by the map $F$ from \Cref{thm:fastfood}.

\section{Discussion, Limitations and Open Questions}\label{sec:conclusion}
We proved a new bound on the time complexity of kernel mean queries, improving in some cases over classical algorithms. Our techniques introduce a new fast spherical embedding theorem and involve a new chaos-based analysis of randomized Hadamard transforms. 

Our spherical embedding result \Cref{thm:frnd} may be of independent interest. \cite{bartal2011dimensionality}'s embedding it is based on has been used in a range of other applications including local (cardinality-based) dimension reduction, Euclidean snowflake dimension reduction, private near neighbor counting in high dimensions, Euclidean graph spanners, and efficient algorithms for the Earth Mover Distance 
\cite{bartal2011dimensionality,gottlieb2015nonlinear,andoni2023differentially,andoni2023sub}. \Cref{thm:frnd} could possibly be used to improve running time bounds in these areas. 

The KDE query times in \Cref{tbl:main} paint a rather complex mosaic of techniques and parameter regimes. 
The main open question is to establish optimal bounds for this fundamental problem. In particular, it is interesting whether other trade-offs between the error $\varepsilon$ and the effective diameter $\Delta_\sigma$ are possible and what is the optimal trade-off between them. 
Is a $\widetilde O(d+1/\varepsilon^2)$ bound possible that depends on $\Delta_\sigma$ only polylogarithmically, or even not at all?


Another open question is extensions to other radial kernels (i.e., which are a function of the Euclidean distance). \Cref{thm:imq} extends our Gaussian kernel result to IMQ kernels with essentially the same bound, through a substantive function approximation result due to \cite{beylkin2010approximation}. Extensions to other kernels may be handled along similar lines, though they might require specialized work and the adapted bound might degrade according to the specific kernel. This limitation of our method is shared by Fastfood and FJLT+RFF, which also necessitate per-kernel treatment, and they are similarly specialized to radial kernels. RFF on its own is less sensitive to the specific kernel and is not specialized to radial kernels. It yields the same $O(d/\varepsilon^2)$ bound as long as the kernel is positive definite, shift-invariant, and its spectral measure is efficient to sample from. 

Our results are primarily theoretical, and we do not include an empirical evaluation. This is partly because  structured matrix methods like RHTs are highly sensitive to hardware and implementation, and require specialized care in practice (see, e.g., \cite{andersson2026engineering}), whereas full matrix multiplication is typically already highly optimized and parallelized. We leave this undertaking for future work. Prior work (e.g., \cite{yu2016orthogonal}) has reported that heuristic methods for kernel approximation can match or outperform provable methods empirically (see, e.g., their comparison of SORF and Fastfood), and it remains an open problem to bridge the gap between them.

\section*{Acknowledgements}
This research was supported by the  
Israeli Ministry of Innovation, Science \&
Technology, 
by Len Blavatnik and the
Blavatnik Family foundation, and by an Alon Scholarship.


\bibliographystyle{amsalpha}
\bibliography{fastkde}

@article{rahimi2007random,
  title={Random features for large-scale kernel machines},
  author={Rahimi, Ali and Recht, Benjamin},
  journal={Advances in neural information processing systems},
  volume={20},
  year={2007}
}

@inproceedings{le2013fastfood,
  title={Fastfood-approximating kernel expansions in loglinear time},
  author={Le, Quoc and Sarl{\'o}s, Tam{\'a}s and Smola, Alex},
  booktitle={Proceedings of the international conference on machine learning},
  volume={85},
  number={8},
  year={2013}
}

@article{ailon2009fast,
  title={The fast Johnson--Lindenstrauss transform and approximate nearest neighbors},
  author={Ailon, Nir and Chazelle, Bernard},
  journal={SIAM Journal on computing},
  volume={39},
  number={1},
  pages={302--322},
  year={2009},
  publisher={SIAM}
}

@inproceedings{bartal2011dimensionality,
  title={Dimensionality reduction: beyond the Johnson-Lindenstrauss bound},
  author={Bartal, Yair and Recht, Ben and Schulman, Leonard J},
  booktitle={Proceedings of the twenty-second annual ACM-SIAM symposium on Discrete Algorithms},
  pages={868--887},
  year={2011},
  organization={SIAM}
}

@inproceedings{cherapanamjeri2022uniform,
  title={Uniform approximations for randomized hadamard transforms with applications},
  author={Cherapanamjeri, Yeshwanth and Nelson, Jelani},
  booktitle={Proceedings of the 54th Annual ACM SIGACT Symposium on Theory of Computing},
  pages={659--671},
  year={2022}
}

@inproceedings{choromanski2018geometry,
  title={The geometry of random features},
  author={Choromanski, Krzysztof and Rowland, Mark and Sarl{\'o}s, Tam{\'a}s and Sindhwani, Vikas and Turner, Richard and Weller, Adrian},
  booktitle={International Conference on Artificial Intelligence and Statistics},
  pages={1--9},
  year={2018},
  organization={PMLR}
}

@book{vershynin2018high,
  title={High-dimensional probability: An introduction with applications in data science},
  author={Vershynin, Roman},
  volume={47},
  year={2018},
  publisher={Cambridge university press}
}

@article{hairer2021introduction,
  title={Introduction to Malliavin calculus},
  author={Hairer, Martin},
  journal={Lecture notes},
  year={2021}
}

@book{sullivan2015introduction,
  title={Introduction to uncertainty quantification},
  author={Sullivan, Timothy John},
  volume={63},
  year={2015},
  publisher={Springer}
}

@incollection{boucheron2003concentration,
  title={Concentration inequalities},
  author={Boucheron, St{\'e}phane and Lugosi, G{\'a}bor and Bousquet, Olivier},
  booktitle={Summer school on machine learning},
  pages={208--240},
  year={2003},
  publisher={Springer}
}

@article{wiener1938homogeneous,
  title={The homogeneous chaos},
  author={Wiener, Norbert},
  journal={American Journal of Mathematics},
  volume={60},
  number={4},
  pages={897--936},
  year={1938},
  publisher={JSTOR}
}

@article{beylkin2010approximation,
  title={Approximation by exponential sums revisited},
  author={Beylkin, Gregory and Monz{\'o}n, Lucas},
  journal={Applied and Computational Harmonic Analysis},
  volume={28},
  number={2},
  pages={131--149},
  year={2010},
  publisher={Elsevier}
}

@article{hall2013differential,
  title={Differential privacy for functions and functional data},
  author={Hall, Rob and Rinaldo, Alessandro and Wasserman, Larry},
  journal={Journal of Machine Learning Research},
  volume={14},
  number={Feb},
  pages={703--727},
  year={2013}
}

@article{wang2016differentially,
  title={Differentially private data releasing for smooth queries},
  author={Wang, Ziteng and Jin, Chi and Fan, Kai and Zhang, Jiaqi and Huang, Junliang and Zhong, Yiqiao and Wang, Liwei},
  journal={Journal of Machine Learning Research},
  volume={17},
  number={51},
  pages={1--42},
  year={2016}
}

@inproceedings{alda2017bernstein,
  title={The bernstein mechanism: Function release under differential privacy},
  author={Alda, Francesco and Rubinstein, Benjamin IP},
  booktitle={Thirty-First AAAI Conference on Artificial Intelligence},
  year={2017}
}

@inproceedings{coleman2020one,
author = {Coleman, Benjamin and Shrivastava, Anshumali},
title = {},
year = {2021},
booktitle = {Proceedings of the ACM SIGSAC Conference on Computer and Communications Security (CCS)},
pages = {3252–3265},
}

@inproceedings{wagner2023fast,
  title={Fast private kernel density estimation via locality sensitive quantization},
  author={Wagner, Tal and Naamad, Yonatan and Mishra, Nina},
  booktitle={International Conference on Machine Learning (ICML)},
  pages={35339--35367},
  year={2023},
  organization={PMLR}
}

@inproceedings{backurs2024efficiently,
  title={Efficiently Computing Similarities to Private Datasets},
  author={Backurs, Arturs and Lin, Zinan and Mahabadi, Sepideh and Silwal, Sandeep and Tarnawski, Jakub},
  booktitle={International Conference on Learning Representations (ICLR)},
  year={2024}
}

@inproceedings{wagner2025learning,
  title={Learning from End User Data with Shuffled Differential Privacy over Kernel Densities},
  author={Wagner, Tal},
  booktitle={International Conference on Learning Representations (ICLR)},  
  year={2025}
}

@inproceedings{dwork2006calibrating,
  title={Calibrating noise to sensitivity in private data analysis},
  author={Dwork, Cynthia and McSherry, Frank and Nissim, Kobbi and Smith, Adam},
  booktitle={Theory of cryptography conference (TCC)},
  pages={265--284},
  year={2006},
  organization={Springer}
}

@article{dwork2014algorithmic,
  title={The algorithmic foundations of differential privacy},
  author={Dwork, Cynthia and Roth, Aaron},
  journal={Foundations and Trends{\textregistered} in Theoretical Computer Science},
  volume={9},
  number={3--4},
  pages={211--407},
  year={2014},
  publisher={Now Publishers, Inc.}
}

@article{johnson1984extensions,
  title={Extensions of Lipschitz mappings into a Hilbert space},
  author={Johnson, William B and Lindenstrauss, Joram},
  journal={Contemporary mathematics},
  volume={26},
  number={189-206},
  pages={1},
  year={1984}
}

@article{greengard1991fast,
  title={The fast Gauss transform},
  author={Greengard, Leslie and Strain, John},
  journal={SIAM Journal on Scientific and Statistical Computing},
  volume={12},
  number={1},
  pages={79--94},
  year={1991},
  publisher={SIAM}
}

@article{choromanski2017unreasonable,
  title={The unreasonable effectiveness of structured random orthogonal embeddings},
  author={Choromanski, Krzysztof M and Rowland, Mark and Weller, Adrian},
  journal={Advances in neural information processing systems},
  volume={30},
  year={2017}
}

@article{andoni2015practical,
  title={Practical and optimal LSH for angular distance},
  author={Andoni, Alexandr and Indyk, Piotr and Laarhoven, Thijs and Razenshteyn, Ilya and Schmidt, Ludwig},
  journal={Advances in neural information processing systems},
  volume={28},
  year={2015}
}

@article{yu2016orthogonal,
  title={Orthogonal random features},
  author={Yu, Felix Xinnan X and Suresh, Ananda Theertha and Choromanski, Krzysztof M and Holtmann-Rice, Daniel N and Kumar, Sanjiv},
  journal={Advances in neural information processing systems},
  volume={29},
  year={2016}
}

@inproceedings{sarlos2006improved,
  title={Improved approximation algorithms for large matrices via random projections},
  author={Sarlos, Tamas},
  booktitle={2006 47th annual IEEE symposium on foundations of computer science (FOCS'06)},
  pages={143--152},
  year={2006},
  organization={IEEE}
}

@article{tropp2011improved,
  title={Improved analysis of the subsampled randomized Hadamard transform},
  author={Tropp, Joel A},
  journal={Advances in Adaptive Data Analysis},
  volume={3},
  number={01n02},
  pages={115--126},
  year={2011},
  publisher={World Scientific}
}

@article{drineas2011faster,
  title={Faster least squares approximation},
  author={Drineas, Petros and Mahoney, Michael W and Muthukrishnan, Shan and Sarl{\'o}s, Tam{\'a}s},
  journal={Numerische mathematik},
  volume={117},
  number={2},
  pages={219--249},
  year={2011},
  publisher={Springer}
}

@article{boutsidis2013improved,
  title={Improved matrix algorithms via the subsampled randomized Hadamard transform},
  author={Boutsidis, Christos and Gittens, Alex},
  journal={SIAM Journal on Matrix Analysis and Applications},
  volume={34},
  number={3},
  pages={1301--1340},
  year={2013},
  publisher={SIAM}
}

@article{ailon2013almost,
  title={An almost optimal unrestricted fast Johnson-Lindenstrauss transform},
  author={Ailon, Nir and Liberty, Edo},
  journal={ACM Transactions on Algorithms (TALG)},
  volume={9},
  number={3},
  pages={1--12},
  year={2013},
  publisher={ACM New York, NY, USA}
}

@article{lu2013faster,
  title={Faster ridge regression via the subsampled randomized hadamard transform},
  author={Lu, Yichao and Dhillon, Paramveer and Foster, Dean P and Ungar, Lyle},
  journal={Advances in neural information processing systems},
  volume={26},
  year={2013}
}

@inproceedings{andoni2023sub,
  title={Sub-quadratic (1+$\epsilon$)-approximate Euclidean Spanners, with Applications},
  author={Andoni, Alexandr and Zhang, Hengjie},
  booktitle={2023 IEEE 64th Annual Symposium on Foundations of Computer Science (FOCS)},
  pages={98--112},
  year={2023},
  organization={IEEE}
}

@article{andoni2023differentially,
  title={Differentially private approximate near neighbor counting in high dimensions},
  author={Andoni, Alexandr and Indyk, Piotr and Mahabadi, Sepideh and Narayanan, Shyam},
  journal={Advances in Neural Information Processing Systems},
  volume={36},
  pages={43544--43562},
  year={2023}
}

@article{gottlieb2015nonlinear,
  title={A nonlinear approach to dimension reduction},
  author={Gottlieb, Lee-Ad and Krauthgamer, Robert},
  journal={Discrete \& Computational Geometry},
  volume={54},
  number={2},
  pages={291--315},
  year={2015},
  publisher={Springer}
}

@inproceedings{karnin2019discrepancy,
  title={Discrepancy, coresets, and sketches in machine learning},
  author={Karnin, Zohar and Liberty, Edo},
  booktitle={Conference on Learning Theory},
  pages={1975--1993},
  year={2019},
  organization={PMLR}
}

@inproceedings{lopez2015towards,
  title={Towards a learning theory of cause-effect inference},
  author={Lopez-Paz, David and Muandet, Krikamol and Sch{\"o}lkopf, Bernhard and Tolstikhin, Iliya},
  booktitle={International Conference on Machine Learning},
  pages={1452--1461},
  year={2015},
  organization={PMLR}
}

@inproceedings{lacoste2015sequential,
  title={Sequential kernel herding: Frank-Wolfe optimization for particle filtering},
  author={Lacoste-Julien, Simon and Lindsten, Fredrik and Bach, Francis},
  booktitle={Artificial Intelligence and Statistics},
  pages={544--552},
  year={2015},
  organization={PMLR}
}

@article{phillips2020near,
  title={Near-optimal coresets of kernel density estimates},
  author={Phillips, Jeff M and Tai, Wai Ming},
  journal={Discrete \& Computational Geometry},
  volume={63},
  number={4},
  pages={867--887},
  year={2020},
  publisher={Springer}
}

@article{dwivedi2024kernel,
  title={Kernel thinning},
  author={Dwivedi, Raaz and Mackey, Lester},
  journal={Journal of Machine Learning Research},
  volume={25},
  number={152},
  pages={1--77},
  year={2024}
}

@inproceedings{charikar2020kernel,
  title={Kernel density estimation through density constrained near neighbor search},
  author={Charikar, Moses and Kapralov, Michael and Nouri, Navid and Siminelakis, Paris},
  booktitle={2020 IEEE 61st Annual Symposium on Foundations of Computer Science},
  pages={172--183},
  year={2020},
  organization={IEEE}
}

@inproceedings{charikar2017hashing,
  title={Hashing-based-estimators for kernel density in high dimensions},
  author={Charikar, Moses and Siminelakis, Paris},
  booktitle={2017 IEEE 58th Annual Symposium on Foundations of Computer Science},
  pages={1032--1043},
  year={2017},
  organization={IEEE}
}

@article{backurs2019space,
  title={Space and time efficient kernel density estimation in high dimensions},
  author={Backurs, Arturs and Indyk, Piotr and Wagner, Tal},
  journal={Advances in neural information processing systems},
  volume={32},
  year={2019}
}

@inproceedings{siminelakis2019rehashing,
	Author = {Siminelakis, Paris and Rong, Kexin and Bailis, Peter and Charikar, Moses and Levis, Philip},
	Booktitle = {International Conference on Machine Learning},
	Pages = {5789--5798},
	Title = {Rehashing kernel evaluation in high dimensions},
	Year = {2019}}

@inproceedings{backurs2018efficient,
	Author = {Backurs, Arturs and Charikar, Moses and Indyk, Piotr and Siminelakis, Paris},
	Booktitle = {2018 IEEE 59th Annual Symposium on Foundations of Computer Science},
	Pages = {615--626},
	Title = {Efficient Density Evaluation for Smooth Kernels},
	Year = {2018}}

@article{kapralovfaster,
  title={Faster Kernel Density Estimation via Hashing Based Time--Space Tradeoffs},
  author={Kapralov, Michael and Sheth, Kshiteej and Weissenberg, Guy},
  year={2026}
}

@inproceedings{backurs2021faster,
  title={Faster kernel matrix algebra via density estimation},
  author={Backurs, Arturs and Indyk, Piotr and Musco, Cameron and Wagner, Tal},
  booktitle={International Conference on Machine Learning},
  pages={500--510},
  year={2021},
  organization={PMLR}
}

@inproceedings{bakshisubquadratic,
  year={2023},
  title={Subquadratic Algorithms for Kernel Matrices via Kernel Density Estimation},
  author={Bakshi, Ainesh and Indyk, Piotr and Kacham, Praneeth and Silwal, Sandeep and Zhou, Samson},
  booktitle={The Eleventh International Conference on Learning Representations (ICLR)}
}

@article{shah2025even,
  title={Even Faster Kernel Matrix Linear Algebra via Density Estimation},
  author={Shah, Rikhav and Silwal, Sandeep and Xu, Haike},
  journal={arXiv preprint arXiv:2510.02540},
  year={2025}
}

@inproceedings{zandieh2023kdeformer,
  title={Kdeformer: Accelerating transformers via kernel density estimation},
  author={Zandieh, Amir and Han, Insu and Daliri, Majid and Karbasi, Amin},
  booktitle={International Conference on Machine Learning},
  pages={40605--40623},
  year={2023},
  organization={PMLR}
}

@inproceedings{choromanski2020rethinking,
  title={Rethinking attention with performers},
  author={Choromanski, Krzysztof and Likhosherstov, Valerii and Dohan, David and Song, Xingyou and Gane, Andreea and Sarlos, Tamas and Hawkins, Peter and Davis, Jared and Mohiuddin, Afroz and Kaiser, Lukasz and Belanger, David and Colwell, Lucy and Weller, Adrian },
  booktitle={International Conference on Learning Representations (ICLR)},
  year={2021}
}

@inproceedings{carrell2025thinformer,
  title={Thinformer: Guaranteed Attention Approximation via Low-Rank Thinning},
  author={Carrell, Annabelle Michael and Gong, Albert and Shetty, Abhishek and Dwivedi, Raaz and Mackey, Lester},
  booktitle={ICML 2025 Workshop on Long-Context Foundation Models},
  year={2025}
}

@inproceedings{indyk2025improved,
  title={Improved algorithms for kernel matrix-vector multiplication under sparsity assumptions},
  author={Indyk, Piotr and Kapralov, Michael and Sheth, Kshiteej and Wagner, Tal},
  booktitle={International Conference on Learning Representations (ICLR)},
  year={2025}
}

@article{avron2016quasi,
  title={Quasi-Monte Carlo feature maps for shift-invariant kernels},
  author={Avron, Haim and Sindhwani, Vikas and Yang, Jiyan and Mahoney, Michael W},
  journal={Journal of Machine Learning Research},
  volume={17},
  number={120},
  pages={1--38},
  year={2016}
}

@article{bach2017equivalence,
  title={On the equivalence between kernel quadrature rules and random feature expansions},
  author={Bach, Francis},
  journal={Journal of machine learning research},
  volume={18},
  number={21},
  pages={1--38},
  year={2017}
}

@article{munkhoeva2018quadrature,
  title={Quadrature-based features for kernel approximation},
  author={Munkhoeva, Marina and Kapushev, Yermek and Burnaev, Evgeny and Oseledets, Ivan},
  journal={Advances in neural information processing systems},
  volume={31},
  year={2018}
}

@inproceedings{huang2024quasi,
  title={Quasi-monte carlo features for kernel approximation},
  author={Huang, Zhen and Sun, Jiajin and Huang, Yian},
  booktitle={Forty-first International Conference on Machine Learning},
  year={2024}
}

@article{rudi2015less,
  title={Less is more: Nystr{\"o}m computational regularization},
  author={Rudi, Alessandro and Camoriano, Raffaello and Rosasco, Lorenzo},
  journal={Advances in neural information processing systems},
  volume={28},
  year={2015}
}

@article{gittens2016revisiting,
  title={Revisiting the Nystr{\"o}m method for improved large-scale machine learning},
  author={Gittens, Alex and Mahoney, Michael W},
  journal={The Journal of Machine Learning Research},
  volume={17},
  number={1},
  pages={3977--4041},
  year={2016},
  publisher={JMLR. org}
}

@article{musco2017recursive,
  title={Recursive sampling for the nystrom method},
  author={Musco, Cameron and Musco, Christopher},
  journal={Advances in neural information processing systems},
  volume={30},
  year={2017}
}

@article{andersson2026engineering,
  title={Engineering Compressed Matrix Multiplication with the Fast Walsh-Hadamard Transform},
  author={Andersson, Joel and Karppa, Matti},
  journal={arXiv preprint arXiv:2601.09477},
  year={2026}
}

\end{document}